\documentclass[a4paper,fleqn]{cas-sc}

\usepackage[numbers]{natbib}
\usepackage{xcolor}
\usepackage{hyperref}
\usepackage{subcaption}
\usepackage{tikz}
\usetikzlibrary{arrows.meta,decorations.markings}
\usepackage{placeins}

\DeclareMathOperator{\sech}{sech}

\newtheorem{theorem}{Theorem}
\newtheorem{lemma}{Lemma}
\newdefinition{remark}{Remark}
\newproof{proof}{Proof}
\newproof{pot}{Proof of Theorem \ref{thm}}
\graphicspath{{Figs/}}

\definecolor{myred}{rgb}{0.7,0.1,0.16}
\definecolor{myblue}{rgb}{0,0.32,0.7}
\definecolor{mygreen}{rgb}{0.133,0.545,0.133}

\begin{document}
\let\WriteBookmarks\relax
\def\floatpagepagefraction{1}
\def\textpagefraction{.001}

\title [mode = title]{Nonholonomic directional pursuit and evasion: Global feedbacks}
\shorttitle{Nonholonomic directional pursuit and evasion: Global feedbacks}
\shortauthors{B. Wang and M. Krsti\'c}

\author[1]{Bo Wang}[orcid=0000-0001-6047-1400]\cormark[1]
\ead{bwang1@ccny.cuny.edu}
\credit{}
\affiliation[1]{organization={Department of Mechanical Engineering, The City College of New York, The City University of New York},
            city={New York},
            state={NY 10031},
            country={USA}}

\author[2]{Miroslav Krsti\'c}[orcid=0000-0002-5523-941X]
\ead{mkrstic@ucsd.edu}
\credit{}
\affiliation[2]{organization={Department of Mechanical and Aerospace Engineering, University of California San Diego},
            city={La Jolla},
            state={CA 92093},
            country={USA}}

\cortext[1]{Corresponding author.}

\nonumnote{\textit{E-mail addresses:} {bwang1@ccny.cuny.edu} (B. Wang).}

\begin{abstract}
In a recent paper by the second coauthor, directional pursuit-evasion for strictly forward-moving nonholonomic vehicles was solved ``half-globally", namely under favorable initial line-of-sight conditions. In this paper, we develop feedback designs that achieve the directional pursuit and evasion objectives from arbitrary initial relative configurations.
We achieve globality with completely different approaches to both the design and the analysis.
Our designs are less aggressive in both the forward-speed and steering laws, allowing transient overshoot of the pursuer-evader range during global reorientation.
The only price that we pay for globality is that our feedback laws require a priori knowledge of the opponent's maximal turning rate, whereas in the half-global work, no known bound of the opponent's turning rate was assumed. For the pursuit problem, the feedback law guarantees finite-time capture with prescribed directional alignment. For the evasion problem, the feedback law guarantees capture avoidance with a prescribed safety margin and achieves spinaway under a decay condition on the pursuer's turning rate. We illustrate the global capture and spinaway with simulations. 
The analysis is based on an integral input-to-state stability type mechanism induced by an endogenous time dilation, together with finite-time coextinction and safety-margin persistence lemmas for singularly coupled scalar inequalities. 
\end{abstract}

\begin{keywords}
 Pursuit-evasion control \sep Lyapunov methods \sep Nonholonomic vehicles 
\end{keywords}

\maketitle

\section{Introduction}\label{sec:introduction}

In \cite{krstic2026directional}, directional pursuit and directional evasion for forward-moving nonholonomic vehicles were formulated and solved under unlimited maneuvering capability of the opponent, but under a ``half-global'' restriction to the initial conditions. By half-global we mean that the initial line-of-sight (LOS) angle is limited to $(-90^\circ, 90^\circ)$, so that the vehicle is not initially oriented in an unfavorable direction relative to its objective.
The present paper removes this LOS restriction and establishes global feedback designs, at the price of assuming a known uniform bound on the opponent's maneuvering capability. The proposed controllers differ from those in \cite{krstic2026directional} in both design and analysis. In particular, the forward-speed and steering feedback laws are less aggressive, allowing transient overshoot of the range during reorientation. This overshoot is the mechanism that enables recovery from arbitrary initial LOS angles and leads to global directional pursuit and evasion guarantees.

\paragraph{Related work.}
Pursuit-evasion has been studied from several complementary viewpoints. Classical formulations are differential-game-theoretic, where capture and escape are characterized geometrically through game values, optimal strategies, and capture barriers \cite{isaacs1965differential,bacsar1998dynamic,exarchos2015suicidal,merz1972game,buzikov2023game,bera2017comprehensive}; computational approaches use Hamilton--Jacobi equations and reachability analysis \cite{mitchell2005time,sun2017multiple}. In mobile robotics, pursuit-evasion has also been studied through search, visibility maintenance, and motion planning \cite{chung2011search,guibas1999visibility,gerkey2006visibility,lozano2022visibility,ruiz2013time,ruiz2016capturing,bravo2020pursuit}. More closely related to the present work are feedback designs for nonholonomic pursuit-evasion, including proportional bearing-only pursuit with capture-or-tracking guarantees but not finite-time capture \cite{dovrat2021tracking}, two-phase worst-case strategies supported by simulations \cite{nath2022two}, nonlinear model predictive control formulations for obstacle-avoiding pursuit-evasion \cite{sani2020pursuit}, and nature-inspired Alert-Turn strategies with explicit capture-time estimates under restricted initial geometries \cite{zhou2026nature}. These results do not provide closed-form feedback laws that achieve prescribed directional pursuit and evasion objectives from arbitrary initial relative configurations. The directional viewpoint adopted here is also distinct from proportional-navigation guidance \cite{kreindler1973optimality,ryoo2004capturability,palumbo2010modern}, which is typically formulated for missile or interceptor models with lateral-acceleration or heading-rate commands rather than the nonholonomic forward-speed and steering-rate inputs considered in this paper.

\paragraph{The work that we advance.}
The recent formulation of directional pursuit and evasion for nonholonomic vehicles in \cite{krstic2026directional} introduced prescribed-angle interception and escape objectives and developed analytically explicit input-to-state-stabilizing (ISS) and inverse-optimal feedback designs \cite{krstic1998inverse}, without requiring the solution of Hamilton--Jacobi--Isaacs equations. This work provides a unified feedback framework for directional pursuit-evasion of nonholonomic vehicles and establishes rigorous ISS and optimality properties. Its guarantees, however, rely on favorable (half-global) initial heading conditions: capture is ensured when the pursuer does not initially head away from the evader, while escape is ensured when the evader initially heads away from the pursuer. Thus, despite this substantial progress, fully \textit{global} feedback design for directional pursuit and evasion of nonholonomic vehicles remains open. The present paper follows the same directional viewpoint, but develops feedback laws that guarantee the directional pursuit and evasion objectives from arbitrary initial relative configurations.

\paragraph{Contributions.}
The present work develops a robust nonlinear feedback framework for directional pursuit and evasion of forward-moving nonholonomic vehicles under kinematic advantage assumptions and unknown but bounded opponent steering inputs. Two complementary problems are addressed: \emph{directional pursuit of a maneuvering evader} and \emph{evasion against a maneuvering pursuer}.
\begin{itemize}
\item The first contribution is the construction of explicit closed-form feedback laws that are globally valid for the relative nonholonomic dynamics. For the pursuit problem, the proposed controller removes the favorable initial-heading restriction in \cite{krstic2026directional} and guarantees finite-time capture with the prescribed directional objective from arbitrary initial relative configurations. For the evasion problem, the proposed controller guarantees capture avoidance for any prescribed safety margin and achieves the desired spinaway configuration under an additional decay condition on the pursuer's steering input.

\item The second contribution is an integral input-to-state stability (iISS) formulation of the pursuer-evader interaction. In both pursuit and evasion, the opponent's unknown steering input is treated as an external input, and the closed-loop relative dynamics are analyzed through iISS-type Lyapunov estimates. This provides a robust nonlinear stability mechanism for handling bounded adversarial maneuvering without measuring or canceling the opponent's steering input.

\item The third contribution is a quantitative characterization of the resulting closed-loop behavior. In the pursuit problem, explicit estimates are obtained for the capture time, transient range excursion, angular-error excursion, and pursuer speed. In the evasion problem, the analysis gives an explicit positive lower bound on the distance from the safety boundary, together with ISS/iISS-type estimates for the angular subsystems. These estimates quantify the transient cost of achieving globality and clarify how the known steering bounds enter the gain selection.

\item The fourth contribution is to the foundations of stability analysis and of independent interest beyond pursuit-evasion. The finite-time capture mechanism is distilled into a coextinction lemma for a singularly coupled pair of scalar Lyapunov differential inequalities, in which the vanishing range variable and the angular reorientation error are driven to zero at the same terminal time. This result is complemented by an endogenous time-dilation lemma, which converts a finite-time singular cascade into an infinite-horizon iISS problem with an integrable disturbance budget. For the evasion problem, a safety-margin persistence lemma gives the corresponding mechanism for maintaining a positive separation from the capture boundary. These auxiliary results are stated abstractly and may be useful for other singularly coupled nonlinear control problems.
\end{itemize}

\paragraph{Relation to prior work of the authors.}
The authors' prior work on polar-coordinate feedback for nonholonomic systems concerns single-agent stabilization and formation: strict Lyapunov functions for unicycle stabilization \cite{han2024safety,wang2026further,todorovski2025modular,todorovski2026nonholonomicrobotparkingfeedback}, explicit polar-coordinate feedback laws \cite{wang2021formation,krstic2025integrator}, and inverse-optimal designs \cite{kim2025inverse,kim2025nonholonomicrobotparkingfeedback}. The present paper differs in studying a two-agent pursuit-evasion interaction with competing objectives, where the opponent's steering input enters as an external disturbance.

Compared with \cite{krstic2026directional}, the present paper removes the half-global initial-heading restrictions and obtains global directional pursuit and evasion guarantees. This global extension relies on a known uniform bound on the opponent's steering input, which is used explicitly in the gain selection. Accordingly, the robustness mechanism changes from ISS in \cite{krstic2026directional} to an iISS-type analysis in the present work. Besides a completely new technical development in feedback design, another enabling idea we present  pertains to the endogenous  reparameterization of time;  unlike the reparametrization introduced in \cite{krstic2026directional}, which requires the range to be monotone, the range-reparametrized time we introduce here allows a non-monotone-in-time range.

\section{Problem statement}\label{sec:problem}

Consider a pursuer-evader pair of forward-moving nonholonomic vehicles moving in the plane. 
Each vehicle is modeled by the standard unicycle kinematics
\begin{equation}\label{eq:unicycle}
    \dot{x}_i = v_i\cos\theta_i,\qquad \dot{y}_i = v_i\sin\theta_i,\qquad \dot{\theta}_i = \omega_i,
\end{equation}
where $(x_i,y_i)\in\mathbb{R}^2$ is the position, $\theta_i\in\mathbb{R}$ is the heading angle,
$v_i(t)>0$ denotes the admissible forward speed, and $\omega_i(t)\in\mathbb{R}$ denotes the steering rate. The index $i\in\{\mathrm{p},\mathrm{e}\}$ identifies the pursuer and the evader. 

Let us introduce the pursuer-evader relative error variables in polar coordinates as
\begin{equation}
    \rho:=\sqrt{(x_{\mathrm{p}}-x_{\mathrm{e}})^2+(y_{\mathrm{p}}-y_{\mathrm{e}})^2},\quad \delta:=\operatorname{atan2}(y_{\mathrm{p}}-y_{\mathrm{e}},x_{\mathrm{p}}-x_{\mathrm{e}})+\pi-\theta_{\mathrm{e}},\quad 
    \gamma:=\delta+\theta_{\mathrm{e}}-\theta_{\mathrm{p}},
\end{equation}
where $\rho$ represents the relative distance, $\delta$ denotes the relative polar angle, and $\gamma$ denotes the relative line-of-sight (LOS) angle\footnote{Throughout the paper, angular variables are understood as continuous representatives on the universal covering space of the circle, obtained by unwrapping when necessary.}, as illustrated in Fig.~\ref{fig:MR}. 
For $\rho>0$, the corresponding pursuer-evader error dynamics are given by
\begin{subequations}\label{eq:3}
    \begin{eqnarray}
        \dot{\rho}&=&v_{\mathrm{e}}\cos\delta - v_{\mathrm{p}}\cos\gamma \label{eq:3a}\\
        \dot{\delta}&=&\frac{1}{\rho}(v_{\mathrm{p}} \sin\gamma - v_{\mathrm{e}}\sin\delta) - \omega_{\mathrm{e}} \label{eq:3b}\\
        \dot{\gamma}&=&\frac{1}{\rho}(v_{\mathrm{p}} \sin\gamma - v_{\mathrm{e}}\sin\delta) - \omega_{\mathrm{p}}. \label{eq:3c}
    \end{eqnarray}
\end{subequations}
Throughout the paper, for any trajectory starting at $t=0$, we use the subscript $0$ to denote the initial value of a state variable; in particular,
$\rho_0:=\rho(0)$, $\delta_0:=\delta(0)$, and $\gamma_0:=\gamma(0)$.

\begin{figure}
    \centering
    \includegraphics[width=0.35\linewidth]{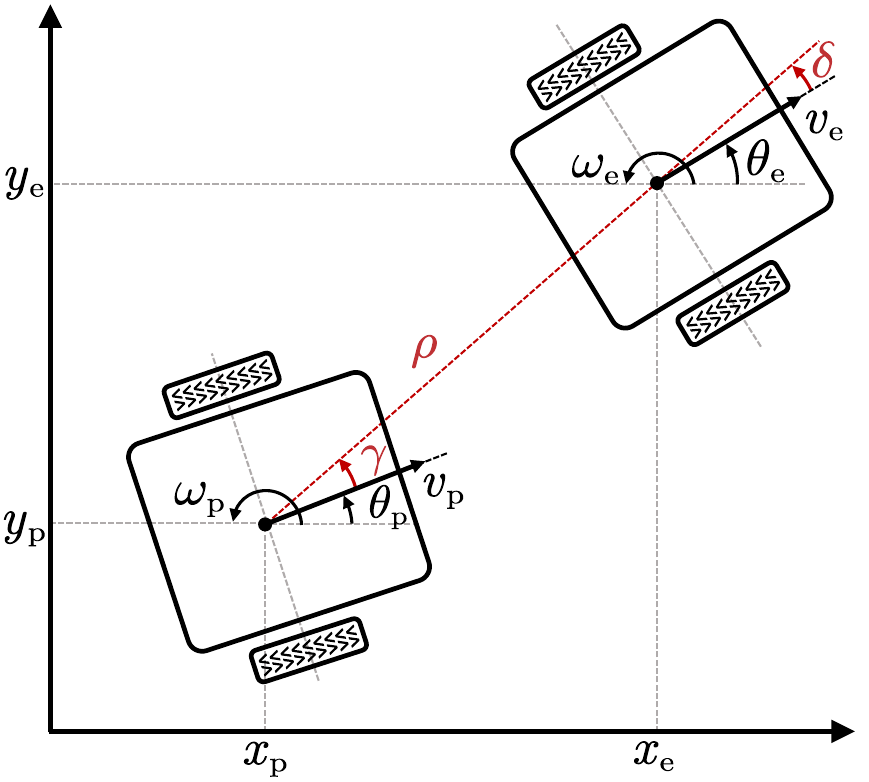}
    \caption{\unboldmath Pursuer configuration $(x_\mathrm{p},y_\mathrm{p},\theta_\mathrm{p})$ relative to the evader configuration $(x_\mathrm{e},y_\mathrm{e},\theta_\mathrm{e})$, and the corresponding relative position and angles in polar coordinates $(\rho,\delta,\gamma)$.}
    \label{fig:MR}
\end{figure}

\paragraph{Problem statement.} We consider the directional pursuit and directional evasion problems of \cite{krstic2026directional}, under the additional information assumption that the opponent's steering input is unknown but uniformly bounded, and with a prescribed safety margin $\rho_{\min}>0$ in the evasion case:
\begin{enumerate}
    \item[(P1)] \textit{Directional pursuit of a maneuvering evader.} Design a controller for a kinematically advantaged pursuer such that: (i) \textit{finite-time capture} is achieved, i.e., there exists a finite time \(T<\infty\) for which \(\rho(t)>0\) for all \(t<T\), and \(\lim_{t\to T^-}\rho(t)=0\); and (ii) \textit{directional alignment} is attained in the sense that the relative angular errors satisfy $\delta(t)\to 0$ and $\gamma(t)\to 0$ as $t\to T^-$.  

    \item[(P2)] \textit{Evasion against a maneuvering pursuer.} Design a controller for a kinematically advantaged evader such that: (i) \textit{capture avoidance} is guaranteed for any prescribed safety margin $\rho_{\min}<\rho_0$;    
    and (ii) \textit{asymptotic spinaway} is achieved, in the sense that the evader asymptotically moves behind the pursuer while facing away from it, namely, the relative angular errors satisfy $\delta(t)\to 0$ and $\gamma(t)\to \pi$ as $t\to \infty$.
\end{enumerate}

\section{Directional pursuit of a maneuvering evader}\label{sec:pursuer}

In this section, we develop a directional pursuit law, in the sense of \cite{krstic2026directional}, with a bounded smooth speed control that guarantees finite-time capture of bounded maneuvering evaders from arbitrary initial relative configurations.

\begin{theorem}\label{thm:2}
Consider \eqref{eq:3}. 
Assume that the evader moves with constant speed \(v_{\mathrm e}>0\) and has a piecewise continuous 
steering input satisfying 
\begin{equation}
    |\omega_{\mathrm e}(t)|\le \omega_M,\quad \forall t\ge0 .
\end{equation}
for a known $\omega_M>0$. Let the pursuer's linear velocity control be 
\begin{equation}\label{eq:vp}
v_{\mathrm{p}} := (1+\varepsilon)\, v_{\mathrm{e}} \sqrt{1+\tanh^2(c_0\delta)},
\end{equation}
and let the pursuer's steering control be
\begin{subequations}\label{eq:wp}
    \begin{eqnarray}
    \omega_{\mathrm{p}}&:=& (1+a(\delta))\frac{v_{\mathrm{p}}\sin\gamma - v_{\mathrm{e}} \sin \delta}{\rho} + \left(k+\frac{c}{\rho}\right)\tilde{\gamma} \label{eq:5a}\\
    \tilde{\gamma}&:=&\gamma-\gamma^*(\delta) \label{eq:5b}\\
    \gamma^*(\delta)&:=& -\arctan(\tanh(c_0\delta)),\label{eq:5c}\\
    a(\delta)&:=&\frac{c_0\sech^2(c_0\delta)}{1 + \tanh^2(c_0\delta)}\in(0,c_0], 
    \end{eqnarray}
\end{subequations}
where $\varepsilon>0$, $c_0^\star(\varepsilon):=\ln((2+\varepsilon)/\varepsilon)/(2\pi)$, $c_0\ge \max\left\{1,c_0^\star(\varepsilon)\right\}$, and $c>\varepsilon v_{\mathrm{e}}$. Let $\eta>0$ satisfy
\begin{equation}\label{eq:eta}
    (1+\varepsilon)\left(\eta+\frac{\eta^2}{2} \right)\le \frac{\varepsilon}{2}
\end{equation}
and choose $k>c_0\omega_M/\eta$.
Then, for every initial condition $\rho_0>0$, $\delta_0\in\mathbb{R}$, and $\gamma_0\in\mathbb{R}$, the closed-loop system admits a unique maximal solution on $[0,T)$, with $\rho(t)>0$ on $[0,T)$ and
\begin{equation}\label{eq:rho_t}
    \lim_{t\to T^-}(\rho(t),\delta(t),\gamma(t))=(0,0,0),
\end{equation}
where
\begin{equation}\label{eq:T1}
    T\le T_2:=\frac{2\rho_0}{\varepsilon v_{\mathrm e}}+\frac{6+5\varepsilon}{\varepsilon}T_1, \quad T_1:=\max\left\{0,\frac{1}{k}\ln\frac{|\gamma_0+\arctan(\tanh(c_0\delta_0))|}{\eta-c_0\omega_M/k}\right\},
\end{equation}
with the convention $\ln(0):=-\infty$.
Furthermore, the closed-loop solution remains bounded on its whole interval of existence, i.e., there exists $Q\in\mathcal{K}$ such that
\begin{equation}\label{eq-stab1}
|(\rho(t),\delta(t),\gamma(t))| \leq Q(|(\rho_0,\delta_0,\gamma_0)|), \quad \forall t\in[0,T)\,,
\end{equation}
and in particular
\begin{subequations}
    \begin{eqnarray}
        \max_{t\in[0,T)}\rho(t)&\le&\rho_0+\frac{(1+\varepsilon)v_e}{k}|\tilde\gamma_0|\left[1+\frac{c_0\omega_M}{k}+\frac{1}{4}|\tilde\gamma_0|\right],\label{eq:bound_rho}\\
        \max_{t\in[0,T)} |\tilde{\gamma}(t)|&\le&\max\left\{|\tilde\gamma_0|, \frac{c_0\omega_M}{k} \right\},\label{eq:bound_tilde_gamma}
    \end{eqnarray}
\end{subequations}
where $|\tilde\gamma_0|=|\gamma_0+\arctan(\tanh(c_0\delta_0))|$.
Finally, the speed is uniformly bounded as
\begin{equation}\label{eq:bound_vp}
    (1+\varepsilon)v_\mathrm{e}\le v_\mathrm{p}(t)< \sqrt{2}(1+\varepsilon)v_\mathrm{e}, \quad \forall t\in[0,T),
\end{equation}
and the steering input satisfies $\omega_\mathrm{p}\in L^\infty([0,T))$. 
\end{theorem}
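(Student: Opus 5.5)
The plan is to work with the error $\tilde\gamma$, which satisfies a closed scalar equation, then the range $\rho$, and finally $\delta$ in a range-dilated time. The first step is a direct computation. Since $\frac{d}{d\delta}\gamma^*(\delta)=-a(\delta)$, substituting \eqref{eq:5a} into \eqref{eq:3c} and using \eqref{eq:3b} makes the singular term $D:=(v_{\mathrm p}\sin\gamma-v_{\mathrm e}\sin\delta)/\rho$ cancel exactly:
\begin{equation*}
\dot{\tilde\gamma}=-\Big(k+\frac{c}{\rho}\Big)\tilde\gamma-a(\delta)\,\omega_{\mathrm e}.
\end{equation*}
Because $0<a\le c_0$ and $|\omega_{\mathrm e}|\le\omega_M$, this gives $\frac{d}{dt}|\tilde\gamma|\le -k|\tilde\gamma|+c_0\omega_M$. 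Comparison then yields \eqref{eq:bound_tilde_gamma} and $|\tilde\gamma(t)|\le e^{-kt}|\tilde\gamma_0|+c_0\omega_M/k$. Hence $|\tilde\gamma(t)|\le\eta$ for all $t\ge T_1$, which is exactly where $k>c_0\omega_M/\eta$ and the definition of $T_1$ enter. Local existence and uniqueness for $\rho>0$ follow from local Lipschitzness of the closed loop and piecewise continuity of $\omega_{\mathrm e}$. The speed bound \eqref{eq:bound_vp} is immediate from $\tanh^2<1$. Boundedness of $\omega_{\mathrm p}$ is deferred to the end, once $D$ and $\tilde\gamma/\rho$ are known to be bounded.

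The second step treats the range. Write $\gamma=\gamma^*+\tilde\gamma$ and note $v_{\mathrm p}\cos\gamma^*=(1+\varepsilon)v_{\mathrm e}$ and $|v_{\mathrm p}\sin\gamma^*|=(1+\varepsilon)v_{\mathrm e}|\tanh(c_0\delta)|$. Expanding $\cos(\gamma^*+\tilde\gamma)$ then gives
\begin{equation*}
\dot\rho\le -\varepsilon v_{\mathrm e}+(1+\varepsilon)v_{\mathrm e}\Big(|\tilde\gamma|+\tfrac12\tilde\gamma^2\Big).
\end{equation*}
By \eqref{eq:eta}, $\dot\rho\le-\varepsilon v_{\mathrm e}/2$ on $[T_1,T)$. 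On $[0,T_1]$ I would integrate the same inequality against the exponential bound on $\tilde\gamma$, which produces \eqref{eq:bound_rho}. The terms $\int e^{-kt}|\tilde\gamma_0|$ and $\int e^{-2kt}|\tilde\gamma_0|^2/2$ give the factors $1/k$ and $|\tilde\gamma_0|/4$, and the cross term gives $c_0\omega_M/k$. Consequently $\rho$ reaches $0$ no later than $T_1+2\rho(T_1)/(\varepsilon v_{\mathrm e})$. Bounding $\rho(T_1)\le\rho_0+(1+\varepsilon)v_{\mathrm e}\cdot(\text{const})\,T_1$ with a crude bound on $|\tilde\gamma|+\tilde\gamma^2/2$ over $[0,T_1]$ should give a bound of the form \eqref{eq:T1}. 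I expect the constant $(6+5\varepsilon)/\varepsilon$ to come out of this bookkeeping, though I have not checked the constant itself. It must also be ruled out that the solution blows up in $\delta$ before $\rho$ vanishes; this is handled in the next step.

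The third step, which is the main obstacle, establishes $\delta\to0$ together with $\rho\to0$. I would introduce the dilated time $d\tau=dt/\rho$, which is legitimate even though $\rho$ is not monotone on $[0,T_1]$. In $\tau$ the equation for $\delta$ becomes
\begin{equation*}
\delta'=-v_{\mathrm e}g(\delta)+v_{\mathrm p}\big(\sin(\gamma^*+\tilde\gamma)-\sin\gamma^*\big)-\rho\,\omega_{\mathrm e},\qquad g(\delta):=(1+\varepsilon)\tanh(c_0\delta)+\sin\delta.
\end{equation*}
The condition $c_0\ge c_0^\star(\varepsilon)$ gives $(1+\varepsilon)\tanh(c_0|\delta|)>1$ whenever $|\delta|\ge\pi$, and $\sin\delta$ has the sign of $\delta$ on $(0,\pi)$. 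Hence $\delta g(\delta)>0$ for $\delta\ne0$, and $g$ is odd, bounded and bounded away from zero outside any neighborhood of $0$. This is an iISS (not ISS) structure with inputs $|\tilde\gamma|$ and $\rho\,\omega_{\mathrm e}$.

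The inputs are integrable in $\tau$. After $T_1$ we have $d\rho/d\tau=\rho\dot\rho\le-\tfrac{\varepsilon v_{\mathrm e}}{2}\rho$, so $\rho$ decays exponentially in $\tau$, and $\tau\to\infty$ as $t\to T^-$. For $\tilde\gamma$, $\frac{d}{d\tau}|\tilde\gamma|\le-(c+k\rho)|\tilde\gamma|+c_0\omega_M\rho$, so $\tilde\gamma$ decays exponentially in $\tau$ at rate $\min\{c,\varepsilon v_{\mathrm e}/2\}$. I would use $c>\varepsilon v_{\mathrm e}$ to guarantee that $\tilde\gamma$ is dominated by $\rho$, i.e. $|\tilde\gamma|/\rho$ stays bounded. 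This is the coextinction property, and it is also what gives $\omega_{\mathrm p}\in L^\infty$.

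With both input integrals finite, a Lyapunov function such as $W(\delta)=\int_0^\delta g$, which is radially unbounded since $g$ is bounded away from zero at infinity, satisfies $W'\le -v_{\mathrm e}g(\delta)^2+C|g(\delta)|(|\tilde\gamma|+\rho)$. This yields boundedness of $\delta$ through a class-$\mathcal K$ bound in the initial data, and then $\delta\to0$ as $\tau\to\infty$ by a Barbalat-type argument. Then $\gamma^*(\delta)\to0$ and $\tilde\gamma\to0$ give $\gamma\to0$, which proves \eqref{eq:rho_t}. Collecting the bounds on $\rho$, $\tilde\gamma$ and $\delta$ gives the bound \eqref{eq-stab1}.
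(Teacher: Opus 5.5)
The capture and alignment argument is essentially the paper's. You use the same cancellation to get the closed $\tilde\gamma$-equation and the same range inequality with the $\eta$-threshold after $T_1$. You integrate the excursion bound the same way, use the same stretched time $\tau=\int\rho^{-1}$ and the same iISS function $\int_0^\delta g$, and get $L^1$ inputs from $\int\rho\,d\tau=T$ and the exponentially filtered $\tilde\gamma$. Your Barbalat step just replaces the appeal to the Angeli--Sontag--Wang lemma.

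\textbf{Gap: $\omega_{\mathrm p}\in L^\infty([0,T))$.} Use the paper's notation: $f_\rho(\delta)=-v_{\mathrm e}(1+\varepsilon-\cos\delta)$ and $f_\delta=-v_{\mathrm e}g$, with $K_1,K_2$ the $\tilde\gamma$-residuals in $\dot\rho$ and $\rho\dot\delta$. Then $\omega_{\mathrm p}=(1+a(\delta))D+(k+c/\rho)\tilde\gamma$ with $D=(f_\delta(\delta)+K_2)/\rho$. Since $f_\delta(\delta)\approx-((1+\varepsilon)c_0+1)v_{\mathrm e}\delta$ near $0$, you need $\delta/\rho$ bounded as $\rho\to0$, not only $\tilde\gamma/\rho$.
\begin{itemize}
\item You defer the boundedness of $D$ but never prove it. You also never use the hypothesis $c_0\ge1$, which the paper invokes exactly here.
\item The paper sets $q_2:=\delta/\rho$ and writes $dq_2/d\tau=\bigl(f_\delta(\delta)/\delta-f_\rho(\delta)-K_1\bigr)q_2+K_2/\rho-\omega_{\mathrm e}$. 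The coefficient tends to $-[(1+\varepsilon)c_0+1-\varepsilon]v_{\mathrm e}\le-2v_{\mathrm e}$ because $c_0\ge1$. In words, $\delta$ decays in $\tau$ faster than $\rho$, whose asymptotic rate is $\varepsilon v_{\mathrm e}$. The forcing is bounded once $q_1:=\tilde\gamma/\rho$ is.
\item Your argument for $q_1$ is misplaced. Knowing that $\tilde\gamma$ and $\rho$ each decay at least at some rate says nothing about their ratio, because a priori $\rho$ may decay in $\tau$ at any rate up to $(3+2\varepsilon)v_{\mathrm e}$.
\item What is needed is $dq_1/d\tau=-(c+k\rho+f_\rho(\delta)+K_1)q_1-a(\delta)\omega_{\mathrm e}$ together with $f_\rho+K_1\to-\varepsilon v_{\mathrm e}$. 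That limit uses $\delta\to0$ and $\tilde\gamma\to0$. Only then does $c>\varepsilon v_{\mathrm e}$ give eventual contraction, so this step must come after the convergence proof.
\item Without convergence first, you would need $c>(3+2\varepsilon)v_{\mathrm e}$, as in the evasion theorem.
\end{itemize}

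\textbf{Secondary gap: the capture-time constant.} Your crude bound on $|\tilde\gamma|+\tilde\gamma^2/2$ over $[0,T_1]$ depends on $|\tilde\gamma_0|$. That quantity is unbounded, since angles live on the universal cover. So it cannot produce the $\tilde\gamma_0$-independent factor $(6+5\varepsilon)/\varepsilon$. Use instead the global bound $K_1<3(1+\varepsilon)v_{\mathrm e}$, which follows from $1-\cos\tilde\gamma\le2$ and $|\tanh(c_0\delta)\sin\tilde\gamma|<1$. This gives $\dot\rho<(3+2\varepsilon)v_{\mathrm e}$, hence $\rho(T_1)\le\rho_0+(3+2\varepsilon)v_{\mathrm e}T_1$ and $T\le T_1+2\rho(T_1)/(\varepsilon v_{\mathrm e})=T_2$.

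Two smaller points should also be justified. First, $\tau\to\infty$ as $t\to T^-$ follows from $\dot\rho\ge-(3+2\varepsilon)v_{\mathrm e}$, which gives $\rho\le(3+2\varepsilon)v_{\mathrm e}(T-t)$. Second, blow-up of $\delta$ before capture is excluded simply because the right-hand side is bounded while $\rho$ is bounded away from zero.
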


Before proving Theorem~\ref{thm:2}, we provide several remarks to clarify the statement and explain the main mechanisms behind the global capture result.

\begin{remark}[Comparison with the half-global pursuit law in \cite{krstic2026directional}]\rm

It is useful to compare the pursuit law in Theorem~\ref{thm:2} with the half-global pursuit law in \cite[Theorem 2]{krstic2026directional}, written with notation adapted to the present paper. In \cite{krstic2026directional}, the pursuer speed and steering have the structure
\begin{equation}
v_{\mathrm{p}}=\frac{(1+\varepsilon)v_{\mathrm{e}}}{\cos\gamma},
\quad
\omega_{\mathrm{p}}=\frac{1}{\rho}\left(c\sec^2\gamma\right)\tilde\gamma,
\quad
\tilde\gamma=\tan\gamma+c_0\delta,
\end{equation}
on the favorable domain where $\cos\gamma>0$. 
Thus, the two designs share the terminal $1/\rho$ structure that drives capture, but differ essentially in how they handle unfavorable line-of-sight angles. The controller in \cite[Theorem 2]{krstic2026directional} uses the velocity law $v_{\mathrm{p}}=(1+\varepsilon)v_e/\cos\gamma$, which becomes singular at the orthogonal LOS angle $\gamma=\pm\pi/2$. The same singularity appears in the backstepping coordinate through $\tan\gamma$ and in the steering gain through $\sec^2\gamma$. These singular factors prevent the controller from crossing the orthogonal line of sight and lead to a half-global result.

Here, $v_{\mathrm p}\in[(1+\varepsilon)v_{\mathrm e},\sqrt{2}(1+\varepsilon)v_{\mathrm e})$ and $\gamma^*(\delta)\in(-\pi/4,\pi/4)$ are bounded, and $\tilde\gamma=\gamma+\arctan(\tanh(c_0\delta))$ and $a(\delta)$ are globally regular, so no singularity occurs at $\gamma=\pm\pi/2$.
\end{remark}

\begin{remark}[Range excursion and the cost of global reorientation]\rm
The bound \eqref{eq:bound_rho} quantifies the initial range excursion caused by the reorientation error $|\tilde\gamma_0|$: it vanishes as $|\tilde\gamma_0|\to 0$ and decreases as $k$ increases. The nonmonotonicity of $\rho(t)$ has a clear physical meaning: from an unfavorable initial heading, the pursuer may initially move farther away while turning toward the evader; once the reorientation error is sufficiently reduced, the range decreases uniformly, and capture follows.
\end{remark}

\begin{remark}[Projection design and the role of saturation]\rm
The control law \eqref{eq:vp}-\eqref{eq:wp} is obtained from a projection design: $\gamma^*(\delta)$ and $v_{\mathrm p}$ are chosen jointly so that
\begin{subequations}
    \begin{eqnarray}
        v_{\mathrm{p}}\cos\gamma^*&=&(1+\varepsilon)v_{\mathrm{e}},\\
        v_{\mathrm{p}}\sin\gamma^*&=&-(1+\varepsilon)v_{\mathrm{e}}\tanh(c_0\delta);
    \end{eqnarray}
\end{subequations}
solving gives \eqref{eq:vp} and \eqref{eq:5c}. The first projection makes the range strictly decreasing on $\tilde\gamma=0$; the second introduces the stabilizing term in $\dot\delta$. The saturation $\tanh(c_0\delta)$ is what keeps $v_{\mathrm p}$ and $\gamma^*$ bounded and renders $K_1,K_2$ globally bounded and uniformly vanishing as $\tilde\gamma\to 0$.
\end{remark}

\begin{remark}[Gain selection]\rm 
The free saturation gain $c_0$ removes the artificial lower bound on the kinematic advantage $\varepsilon$: $c_0\ge c_0^\star(\varepsilon)$ ensures sign-definiteness of $g$ (Lemma \ref{lem:1}), while $c_0\ge1$ simplifies the terminal damping estimates used to prove boundedness of the steering input. Hence $\varepsilon>0$ is required only to provide the strict closing margin in the range dynamics. The conditions \eqref{eq:eta} and $k>c_0\omega_M/\eta$ have a simple conservative interpretation. Choosing $\eta(\varepsilon):=\varepsilon/(4+4\varepsilon)$ verifies \eqref{eq:eta}, since $(1+\varepsilon)(\eta+\eta^2/2)\le 9\varepsilon/32<\varepsilon/2$, and then $k>c_0\omega_M/\eta$ becomes $k>4c_0(1+\varepsilon)\omega_M/\varepsilon$. Thus, for small $\varepsilon$, the required gain scales as $k=\mathcal{O}(c_0\omega_M/\varepsilon)$: stronger evader maneuvering and smaller kinematic advantage both require larger $k$.
\end{remark}

\begin{proof}[Theorem \ref{thm:2}]~

\paragraph{1. Closed-loop system and well-posedness.} 

For later reference, define
\begin{subequations}\label{eq:func}
    \begin{eqnarray}
    f_\rho(\delta)&:=&-v_{\mathrm e}(1+\varepsilon-\cos\delta),\label{eq:f_rho}\\
    f_\delta(\delta)&:=&-v_{\mathrm e}\left((1+\varepsilon)\tanh(c_0\delta)+\sin\delta\right), \label{eq:f_delta}\\
    K_1(\delta,\tilde\gamma)&:=&(1+\varepsilon)v_{\mathrm e}
\left[(1-\cos\tilde\gamma)-\tanh(c_0\delta)\sin\tilde\gamma\right],\label{eq:K1}\\
    K_2(\delta,\tilde\gamma)&:=&(1+\varepsilon)v_{\mathrm e}
\left[\tanh(c_0\delta)(1-\cos\tilde\gamma)+\sin\tilde\gamma\right],\label{eq:K2}
    \end{eqnarray}
\end{subequations}
Substituting \eqref{eq:vp}, \eqref{eq:5b}, and \eqref{eq:5c} into \eqref{eq:3} and applying the identities
\begin{equation}\label{eq:cossin}
    \cos(-\arctan(r)+s)=\frac{\cos(s)+r\sin(s)}{\sqrt{1+r^2}},\quad \sin(-\arctan(r)+s)=\frac{\sin(s)-r\cos(s)}{\sqrt{1+r^2}},
\end{equation}
followed by the trivial add-and-subtract of $(1+\varepsilon)v_{\mathrm e}$ in $\dot\rho$ and of $(1+\varepsilon)v_{\mathrm e}\tanh(c_0\delta)$ in $\dot\delta$, yields the closed-loop system
\begin{subequations}\label{eq:14}
	\begin{eqnarray}
		\dot{\rho}
		&=&
		f_\rho(\delta)+K_1(\delta,\tilde{\gamma})
		\label{eq:26a}\\
        \dot{\delta}
        &=&
        \frac{1}{\rho}
        [f_\delta(\delta) + K_2(\delta,\tilde{\gamma})]
        -\omega_{\mathrm{e}}
        \label{eq:26b}\\
        \dot{\tilde{\gamma}}
        &=&
        -\left(k+\frac{c}{\rho}\right)\tilde{\gamma} - a(\delta)\omega_{\mathrm{e}},
        \label{eq:26c}
	\end{eqnarray}
\end{subequations}
where \eqref{eq:26c} uses $\dot{\tilde\gamma}=\dot\gamma-\dot\gamma^*=\dot\gamma+a(\delta)\dot\delta$ from \eqref{eq:5b}-\eqref{eq:5c} and the steering law \eqref{eq:5a}. The functions $f_\rho,f_\delta,K_1,K_2$ are globally bounded, with $f_\rho\in[-(2+\varepsilon)v_{\mathrm e},-\varepsilon v_{\mathrm e}]$ and $K_1\in(-(1+\varepsilon)v_{\mathrm e},3(1+\varepsilon)v_{\mathrm e})$, with $\delta f_\delta(\delta)<0$ for $\delta\ne 0$ by Lemma~\ref{lem:1}, and $K_1,K_2\to 0$ as $\tilde\gamma\to 0$ uniformly in $\delta$. Since the right-hand side of \eqref{eq:14} is locally Lipschitz on $(0,\infty)\times\mathbb R^2$ and bounded whenever $\rho$ is bounded away from zero, the closed-loop system admits a unique maximal solution on $[0,T)$ with $\rho(t)>0$ on $[0,T)$; if $T<\infty$, then $\rho(t)\to 0$ as $t\to T^-$.

\paragraph{2. Finite-time capture.} 
The capture properties of the theorem are obtained by verifying the hypotheses of Lemma~\ref{lem:capture} (Appendix~\ref{app:capture}) for the pair $(\rho,|\tilde\gamma|)$ governed by \eqref{eq:26a} and \eqref{eq:26c}, and reading off its conclusions.

\par\medskip\noindent
\textbf{Verification of the hypotheses.} From \eqref{eq:26c}, using $a(\delta)\in(0,c_0]$ and $|\omega_{\mathrm{e}}(t)|\le\omega_M$, the reorientation error obeys
\begin{equation}\label{eq:hyp_w}
  \frac{\rm d}{{\rm d}t}|\tilde{\gamma}|
  \le
  -\left(k+\frac{c}{\rho}\right)|\tilde{\gamma}| + c_0\omega_M .
\end{equation}
Using the inequalities $|1-\cos(z)|\le z^2/2$, $|\sin(z)|\le|z|$, and $|\tanh(c_0\delta)|\le1$, one has
\begin{equation}\label{eq:K1_bound}
    |K_1(\delta,\tilde{\gamma})|\le (1+ \varepsilon) v_{\mathrm{e}} \left( \frac{\tilde{\gamma}^2}{2}+|\tilde{\gamma}|  \right),
\end{equation}
so that, by \eqref{eq:26a} and the bounds on $f_\rho, K_1$,
\begin{equation}\label{eq:hyp_rho}
  -(3+2\varepsilon)v_{\mathrm{e}}
  \le
  \dot{\rho}=f_\rho(\delta)+K_1(\delta,\tilde{\gamma})
  \le
  -\varepsilon v_{\mathrm{e}}+(1+\varepsilon)v_{\mathrm{e}}\left(|\tilde{\gamma}|+\frac{\tilde{\gamma}^2}{2}\right).
\end{equation}
Thus, the hypotheses \eqref{eq:lem-w}-\eqref{eq:lem-rho} of Lemma~\ref{lem:capture} hold with
\begin{equation}\label{eq:dictionary}
  w=|\tilde{\gamma}|,\quad b=c_0\omega_M,\quad q=\varepsilon v_{\mathrm{e}},\quad L=L_\rho:=(3+2\varepsilon)v_{\mathrm{e}},\quad \chi(w)=(1+\varepsilon)v_{\mathrm{e}}\left(w+\frac{w^2}{2}\right),
\end{equation}
and the design gains $c,k$. The inner floor is $\underline w=b/k=c_0\omega_M/k$. Moreover, condition \eqref{eq:eta} gives $\chi(\eta)\le q/2$, and hence $\eta\le w^\star:=\chi^{-1}(q/2)$. Therefore the gain condition $k>c_0\omega_M/\eta$ implies $b/k<\eta\le w^\star$, which is precisely the requirement in Lemma~\ref{lem:capture}.

\par\medskip\noindent
\textbf{Conclusions of Lemma \ref{lem:capture}.}
With $\alpha:=c/L_\rho$, Lemma~\ref{lem:capture} yields
\begin{itemize}
\item[(i)] the ultimate bound \eqref{eq:bound_tilde_gamma}; Explicitly, the Step-1 majorant is
\begin{equation}\label{eq:tilde_gamma_bound}
    |\tilde{\gamma}(t)|\le e^{-kt}|\tilde\gamma_0| + \frac{c_0\omega_M}{k}(1 - e^{-kt});
\end{equation}
\item[(ii)] finite-time capture \eqref{eq:rho_t}; Since $K_1<3(1+\varepsilon)v_{\mathrm{e}}$, the excursion admits the crude bound $\Omega\le(3+2\varepsilon)v_{\mathrm{e}}T_1$; together with the entry time $T_1$ majorized as in \eqref{eq:T1}, the lemma's bound $T_1+{2}(\rho_0+\Omega(w_0))/q$ gives the explicit capture-time estimate $T\le T_2$ of \eqref{eq:T1}, with underlying two-phase majorant
\begin{equation}\label{eq:rho_bar}
\boxed{
    \rho(t)\le \bar{\rho}(t):=
    \begin{cases}
\rho_0 + (3+2\varepsilon)v_{\mathrm{e}} t, 
& 0\le t< T_1,\\[2mm]
 \rho_0+(3+2\varepsilon)v_{\mathrm{e}} T_1 
-\dfrac{\varepsilon v_{\mathrm{e}}}{2}(t-T_1),
& T_1\le t < T_2;
\end{cases}}
\end{equation}
\item[(iv)] terminal convergence $\tilde{\gamma}(t)\to0$ as $t\to T^-$, whose explicit terminal estimate \eqref{eq:44} is recorded in Part~3.
\end{itemize}

The excursion bound (iii) of Lemma~\ref{lem:capture} can be sharpened into the initial-condition-explicit form \eqref{eq:bound_rho} for the present $\chi$. Since $c_0\omega_M/k<\eta$, condition \eqref{eq:eta} implies
\begin{equation}
(1+\varepsilon)
\left(
\frac{c_0\omega_M}{k}
+
\frac{1}{2}
\left(\frac{c_0\omega_M}{k}\right)^2
\right)
\le
\frac{\varepsilon}{2}.
\end{equation}
Using \eqref{eq:26a}, \eqref{eq:tilde_gamma_bound}, and \eqref{eq:K1_bound}, one obtains 
\begin{subequations}
    \begin{eqnarray}
        \dot\rho(t)
&\le&-\varepsilon v_e+(1+\varepsilon)v_e\left(|\tilde\gamma(t)|+\frac{1}{2}|\tilde\gamma(t)|^2
\right)\\
&\le&-\frac{\varepsilon v_e}{2}+(1+\varepsilon)v_e\left[\left(1+\frac{c_0\omega_M}{k}\right)
e^{-kt}|\tilde\gamma_0|+\frac{1}{2}e^{-2kt}|\tilde\gamma_0|^2\right].
    \end{eqnarray}
\end{subequations}
Integrating this estimate gives, before capture,
\begin{equation}
\rho(t)\le\rho_0+\frac{(1+\varepsilon)v_e}{k}|\tilde\gamma_0|\left[1+\frac{c_0\omega_M}{k}+\frac{1}{4}|\tilde\gamma_0|\right]-\frac{\varepsilon v_e}{2}t,
\end{equation}
which yields \eqref{eq:bound_rho}.

\paragraph{3. Angular convergence at capture.} From the closed-loop \eqref{eq:26a} and the bounds on $f_\rho, K_1$, $|\dot\rho(t)|\le L_\rho:=(3+2\varepsilon)v_{\mathrm e}$ for all $t\in[0,T)$. Since $\rho(T)=0$, for any $t\in[0,T)$,
\begin{equation}\label{eq:40}
    \rho(t)=\left|\int_t^T\dot{\rho}(\tau) {\rm d}\tau\right| \le L_\rho (T-t)
    \quad\Rightarrow\quad
    \frac{1}{\rho(t)}\ge \frac{1}{L_\rho(T-t)}.
\end{equation}
The terminal mechanism of Lemma~\ref{lem:capture} (Step~3 of its proof) specializes here through the explicit gain bound \eqref{eq:40}. With $\alpha:=c/L_\rho>0$, substituting $c/\rho(t)\ge\alpha/(T-t)$ into \eqref{eq:hyp_w}, applying the comparison principle, and dropping the factor $e^{-k(t-\tau)}\le1$ gives the explicit terminal estimate
\begin{equation}\label{eq:44}
\boxed{
|\tilde{\gamma}(t)|
\le
e^{-kt}\left(\frac{T-t}{T}\right)^{\alpha} |\tilde\gamma_0|
+
c_0\omega_M \int_0^t
\left(\frac{T-t}{T-\tau}\right)^{\alpha}  {\rm d}\tau.}
\end{equation}
By Lemma~\ref{lem:capture}(iv) the right-hand side tends to zero, so $|\tilde{\gamma}(t)|\to0$ as $t\to T^-$; the estimate \eqref{eq:44} is of the ISS kind, with a vanishing input effect at $t=T$.

Next, we analyze the bearing dynamics \eqref{eq:26b}. Written out, $\dot\delta=\frac{1}{\rho}\left(f_\delta(\delta)+K_2(\delta,\tilde\gamma)-\rho\,\omega_{\mathrm e}\right)$ is a finite-time singular cascade: the bearing is driven at a rate $1/\rho$ that diverges as the capture variable $\rho\to0$. Its convergence is supplied by Lemma~\ref{lem:dilation} (Appendix~\ref{app:dilation}), whose hypotheses we now verify and whose conclusion we then read off.

\par\medskip\noindent
\textbf{Endogenous clock.} By Part~2, $\rho(t)\to0$ at the finite time $T$, and \eqref{eq:40} gives $\rho(t)\le L_\rho(T-t)$ with $L_\rho=(3+2\varepsilon)v_{\mathrm e}$; thus $\rho$ is an endogenous finite-time clock. Lemma~\ref{lem:dilation} introduces the stretched time $s(t):=\int_0^t\rho(\tau)^{-1}{\rm d}\tau$, a strictly increasing bijection of $[0,T)$ onto $[0,\infty)$, in which \eqref{eq:26b} desingularizes to
\begin{equation}\label{eq:51}
    \frac{{\rm d}\delta}{{\rm d}s}=f_\delta(\delta)+d(s),\qquad s\in[0,\infty),
\end{equation}
with additive input
\begin{equation}\label{eq:d}
    d(s):=K_2\left(\delta(t(s)),\tilde\gamma(t(s))\right)-\rho(t(s))\,\omega_{\mathrm e}(t(s)).
\end{equation}

\par\medskip\noindent
\textbf{Intrinsic iISS.} The bearing equation \eqref{eq:51} admits an iISS Lyapunov function with linearly bounded supply rate. Indeed, with
\begin{equation*}
    V_\delta(\delta):=-\int_0^\delta f_\delta(l)\,{\rm d}l
    = v_{\mathrm e}\left[\frac{1+\varepsilon}{c_0}\ln(\cosh(c_0\delta))+1-\cos\delta\right],
\end{equation*}
positive definite by Lemma~\ref{lem:1}, the inequalities $\ln(\cosh(r))\le r$, $1-\cos(r)\le r^2/2$, and $c_0\ge1$ give properness,
\begin{equation}\label{eq:proper}
    \alpha_{1\delta}(|\delta|)\le V_\delta(\delta)\le\alpha_{2\delta}(|\delta|),\quad
    \alpha_{1\delta}(r):=\frac{(1+\varepsilon)v_{\mathrm e}}{c_0}\ln(\cosh(c_0 r)),\ \
    \alpha_{2\delta}(r):=v_{\mathrm e}\left[(1+\varepsilon)r+\frac{r^2}{2}\right],
\end{equation}
with $\alpha_{1\delta},\alpha_{2\delta}\in\mathcal K_\infty$. Along \eqref{eq:51},
\begin{equation}\label{eq:53b}
    \frac{{\rm d}V_\delta}{{\rm d}s}=-f_\delta^2(\delta)-f_\delta(\delta)\,d(s)
    \le -f_\delta^2(\delta)+(2+\varepsilon)v_{\mathrm e}|d(s)|,
\end{equation}
with $f_\delta^2$ positive definite (Lemma~\ref{lem:1}) and $(2+\varepsilon)v_{\mathrm e}|\cdot|\in\mathcal K_\infty$; hence $V_\delta$ is an iISS Lyapunov function for \eqref{eq:51}.

\par\medskip\noindent
\textbf{$\rho$-modulated disturbance.} In \eqref{eq:d}, the term $\rho\,\omega_{\mathrm e}$ is of the form $\rho\times(\text{bounded})$, since $|\omega_{\mathrm e}|\le\omega_M$. For the remaining term,
\begin{equation}\label{eq:K2_bound}
    |K_2(\delta,\tilde\gamma)|\le (1+\varepsilon)v_{\mathrm e}\left(|\tilde\gamma|+\frac12\tilde\gamma^2\right),
\end{equation}
and in stretched time, the reorientation error is the output of the exponentially stable filter
\begin{equation}\label{eq:gamma_s_dyn}
    \frac{{\rm d}\tilde\gamma}{{\rm d}s}=-(c+k\rho)\,\tilde\gamma-\rho\,a(\delta)\,\omega_{\mathrm e},
    \qquad |a(\delta)|\le c_0,
\end{equation}
of decay rate at least $c>0$ and forcing $-\rho\,a(\delta)\omega_{\mathrm e}$ again of the form $\rho\times(\text{bounded})$. Both channels of $d$ are therefore of the $\rho$-modulated type admitted by Lemma~\ref{lem:dilation}(ii).

\par\medskip\noindent
\textbf{Conclusion of Lemma~\ref{lem:dilation}.} 
The hypotheses of Lemma~\ref{lem:dilation} hold. Its budget identity $\int_0^\infty \rho(s) {\rm d}s=T<\infty$ implies $\rho\omega_e\in L^1([0,\infty))$ and $\rho a(\delta)\omega_e\in L^1([0,\infty))$. Since $\tilde\gamma$ satisfies the exponentially stable filter \eqref{eq:gamma_s_dyn} with decay rate at least $c>0$, it follows that $\tilde\gamma\in L^1([0,\infty))$. Together with the boundedness of $\tilde\gamma$ from \eqref{eq:bound_tilde_gamma}, this gives $\tilde\gamma^2\in L^1([0,\infty))$. Hence, by \eqref{eq:K2_bound}, $K_2\in L^1([0,\infty))$, and since $\rho\omega_e\in L^1([0,\infty))$, we obtain $d\in L^1([0,\infty))$. Lemma~\ref{lem:dilation}(iii)(a) yields
\begin{equation}\label{eq:delta_iISS}
    |\delta(s)|\le \beta_\delta(|\delta(s_0)|,s-s_0)+\chi\left( \int_{s_0}^s |d(\tau)|{\rm d}\tau \right)
\end{equation}
for some $\beta_\delta\in\mathcal{KL}$ and $\chi\in\mathcal{K_\infty}$, together with $\delta(s)\to0$ as $s\to\infty$. Since $t(s)\to T^-$, we conclude $\delta(t)\to0$ as $t\to T^-$, giving the $\delta$-component of \eqref{eq:rho_t}.

\par\medskip\noindent
\textbf{Global stability bound.} It remains to derive the class-$\mathcal K$ envelope \eqref{eq-stab1}. We bound each component of $(\rho,\delta,\gamma)$ class-$\mathcal K$-ly in $(\rho_0,\delta_0,\gamma_0)$ and combine. The range admits the explicit bound \eqref{eq:bound_rho}, $\rho(t)\le\rho_0+Q_\rho(|\tilde\gamma_0|)$, with $Q_\rho\in\mathcal K$ a polynomial in $|\tilde\gamma_0|$. The reorientation error is class-$\mathcal K$-bounded by Lemma~\ref{lem:capture}(i) applied to the pair $(\rho,|\tilde\gamma|)$ under the dictionary of Part~2:
\begin{equation*}
  |\tilde\gamma(t)|\le\Sigma(|\tilde\gamma_0|,\rho_0)\qquad\text{for all }t\in[0,T),
\end{equation*}
where $\Sigma$ is the envelope of \eqref{eq:lem-sigma}. For the bearing, the iISS estimate \eqref{eq:delta_iISS} evaluated at $s_0=0$ gives
\begin{equation*}
  \boxed{|\delta(t)|\le\beta_\delta(|\delta_0|,0)+\chi\left(\int_0^\infty|d(\tau)|{\rm d}\tau\right),}
\end{equation*}
and the energy on the right is finite and class-$\mathcal K$-bounded in $(\rho_0,|\tilde\gamma_0|)$: Lemma~\ref{lem:dilation}(ii) gives $\int_0^\infty\rho(s)|\omega_{\mathrm e}(s)|\,{\rm d}s\le\omega_M T$, and the filter \eqref{eq:gamma_s_dyn}---driven by the $\rho$-modulated input $\rho a(\delta)\omega_{\mathrm e}$ of $L^\infty$-norm at most $c_0\omega_M$---transfers an $L^1$ bound to $\tilde\gamma$, $\int_0^\infty|\tilde\gamma(s)|\,{\rm d}s\le|\tilde\gamma_0|/c+c_0\omega_M T/c$, and \eqref{eq:K2_bound} controls $K_2$ in $L^1$ accordingly; by the linearly bounded supply rate hypothesis of Lemma~\ref{lem:dilation}(iii)(a) 
(here realized through \eqref{eq:53b} with supply $(2+\varepsilon)v_{\mathrm e}|d|$), the energy $\int_0^\infty |d(\tau)|{\rm d}\tau$ has the class-$\mathcal K$ bound.
Finally, $\gamma=\tilde\gamma-\arctan(\tanh(c_0\delta))$ gives $|\gamma(t)|\le|\tilde\gamma(t)|+c_0|\delta(t)|$, and $|\tilde\gamma_0|\le|\gamma_0|+c_0|\delta_0|$. With $T$ itself bounded by a class-$\mathcal K$ function of $(\rho_0,|\tilde\gamma_0|)$ via \eqref{eq:T1}, combining the three component bounds yields a class-$\mathcal K$ envelope $Q$ such that \eqref{eq-stab1} holds.

\paragraph{4. Boundedness of inputs.}
The estimation \eqref{eq:bound_vp} holds directly from $|\tanh(\cdot)|<1$. For the steering input, using $v_{\mathrm{p}}\sin\gamma - v_{\mathrm{e}}\sin\delta = f_\delta(\delta) + K_2(\delta,\tilde\gamma)$,
\begin{subequations}
\begin{eqnarray}
\omega_{\mathrm{p}}
&=&(1+a(\delta))\frac{f_\delta(\delta)+K_2(\delta,\tilde\gamma)}{\rho}+\left(k+\frac{c}{\rho}\right)\tilde\gamma\\
&=&(1+a(\delta))\left(\frac{f_\delta(\delta)}{\delta}\,q_2+\frac{K_2(\delta,\tilde\gamma)}{\rho}\right)+k\tilde\gamma+c\,q_1, \label{eq:93}
\end{eqnarray}
\end{subequations}
with
\begin{equation}
q_1(s) := \frac{\tilde\gamma(s)}{\rho(s)}, \quad q_2(s) := \frac{\delta(s)}{\rho(s)},
\end{equation}
well defined on $[0,\infty)$ since $\rho(t)>0$ on $[0,T)$. Note also the $\rho$-dynamics in $s$-time,
\begin{equation}\label{eq:rho_dyn_s}
\frac{{\rm d}\rho}{{\rm d}s}=\rho\bigl(f_\rho(\delta)+K_1(\delta,\tilde\gamma)\bigr).
\end{equation}

\par\medskip\noindent
\textbf{Boundedness of $q_1$.} Using \eqref{eq:gamma_s_dyn} and \eqref{eq:rho_dyn_s},
\begin{equation}\label{eq:77}
\frac{{\rm d}q_1}{{\rm d}s}=-\bigl(c+k\rho+f_\rho(\delta)+K_1(\delta,\tilde\gamma)\bigr)q_1-a(\delta)\omega_{\mathrm e}.
\end{equation}
As $s\to\infty$, $\rho,\delta,\tilde\gamma,K_1\to 0$ and $f_\rho\to-\varepsilon v_{\mathrm e}$, so the dominant coefficient $c+k\rho+f_\rho+K_1\to c-\varepsilon v_{\mathrm e}>0$. Hence there exist $\mu_1\in(0,c-\varepsilon v_{\mathrm e})$ and $S_1>0$ with the coefficient $\ge\mu_1$ for $s\ge S_1$. Using $|a|\le c_0$ and $|\omega_{\mathrm e}|\le\omega_M$, \eqref{eq:77} gives $\frac{\rm d}{{\rm d}s}|q_1|\le-\mu_1|q_1|+c_0\omega_M$, and the comparison principle yields $|q_1(s)|\le e^{-\mu_1(s-S_1)}|q_1(S_1)|+c_0\omega_M/\mu_1$ for $s\ge S_1$. On the compact precursor $[0,S_1]$, $q_1$ is bounded by continuity of $\rho,\tilde\gamma$ with $\rho>0$. Therefore $q_1\in L^\infty([0,\infty))$.

\par\medskip\noindent
\textbf{Boundedness of $q_2$.} Using \eqref{eq:rho_dyn_s} and \eqref{eq:51},
\begin{equation}\label{eq:83}
\frac{{\rm d}q_2}{{\rm d}s}=\left(\frac{f_\delta(\delta)}{\delta}-f_\rho(\delta)-K_1(\delta,\tilde\gamma)\right)q_2+\frac{K_2(\delta,\tilde\gamma)}{\rho}-\omega_{\mathrm e},
\end{equation}
with $f_\delta(\delta)/\delta$ extended at $\delta=0$ by $f_\delta'(0)=-((1+\varepsilon)c_0+1)v_{\mathrm e}$. The dominant coefficient tends to $-[(1+\varepsilon)c_0+1-\varepsilon]v_{\mathrm e}\le-2v_{\mathrm e}$ (using $c_0\ge 1$), so there exist $\mu_2\in(0,2v_{\mathrm e})$ and $S_2>0$ with this coefficient $\le-\mu_2$ for $s\ge S_2$. The forcing $K_2/\rho$ is in $L^\infty$ since, by \eqref{eq:K2_bound} and $q_1\in L^\infty$,
\begin{equation}
\frac{|K_2(\delta,\tilde\gamma)|}{\rho}\le(1+\varepsilon)v_{\mathrm e}\left(|q_1|+\frac{\rho q_1^2}{2}\right)\in L^\infty([0,\infty)).
\end{equation}
The same comparison-principle argument as for $q_1$ then gives $q_2\in L^\infty([0,\infty))$. Therefore every term in \eqref{eq:93} is bounded, hence $\omega_{\mathrm p}\in L^\infty([0,T))$. \qed
\end{proof}

\section{Evasion against a maneuvering pursuer}\label{sec:evader}

In this section, we develop a directional evasion law, in the sense of \cite{krstic2026directional}, with a bounded smooth speed control that guarantees sustained evasion from arbitrary initial relative configurations. Let us introduce the error variable
\begin{equation}
    \phi:=\gamma-\pi
\end{equation}
and rewrite \eqref{eq:3} as
\begin{subequations}\label{eq:95}
    \begin{eqnarray}
        \dot{\rho}&=&v_{\mathrm{e}}\cos\delta + v_{\mathrm{p}}\cos\phi \label{eq:95a}\\
        \dot{\phi}&=&-\frac{1}{\rho}(v_{\mathrm{p}} \sin\phi + v_{\mathrm{e}}\sin\delta) - \omega_{\mathrm{p}} \label{eq:95b}\\
        \dot{\delta}&=&-\frac{1}{\rho}(v_{\mathrm{p}} \sin\phi + v_{\mathrm{e}}\sin\delta) - \omega_{\mathrm{e}}. \label{eq:95c}
    \end{eqnarray}
\end{subequations}

\begin{theorem}\label{thm:3}
Consider \eqref{eq:95}. Assume that the pursuer moves with constant speed $v_{\mathrm{p}}>0$ and has a piecewise continuous steering input satisfying
\begin{equation}\label{eq:w_p_bounded}
    |\omega_{\mathrm{p}}(t)|\le \omega_M, \quad \forall t\ge 0.
\end{equation}
for a known $\omega_M>0$. Fix a safety margin $\rho_{\min}>0$.
Let the evader's linear velocity control be 
\begin{equation}\label{eq:ve}
v_{\mathrm{e}} := (1+\varepsilon)\, v_{\mathrm{p}} \sqrt{1+\tanh^2(c_0\phi)},
\end{equation}
and let the evader's steering control be
\begin{subequations}\label{eq:we}
    \begin{eqnarray}
    \omega_{\mathrm{e}}&:=& (a(\phi)-1)\frac{v_{\mathrm{p}}\sin\phi + v_{\mathrm{e}} \sin \delta}{\rho} + \left( k + \frac{c}{\rho - \rho_{\min}} \right)\tilde{\delta} \label{eq:98a}\\
    \tilde{\delta}&:=&\delta-\delta^*(\phi) \label{eq:98b}\\
    \delta^*(\phi)&:=& \arctan(\tanh(c_0\phi)),\label{eq:98c}\\
    a(\phi)&:=&\frac{c_0\sech^2(c_0\phi)}{1 + \tanh^2(c_0\phi)}\in(0,c_0], \label{eq:a_func}
    \end{eqnarray}
\end{subequations}
where $\varepsilon>0$, $c_0\ge c_0^\star(\varepsilon):=\ln((2+\varepsilon)/\varepsilon)/(2\pi)$, $c>(3+2\varepsilon) v_{\mathrm{p}}$. Let $\eta>0$ satisfy \eqref{eq:eta} and choose $k>c_0\omega_M/\eta$.
Then, for every initial condition $\rho_0>\rho_{\min}$, $\delta_0\in\mathbb{R}$, and $\phi_0\in\mathbb{R}$, the closed-loop solution exists uniquely for all $t\ge 0$ and satisfies $\rho(t)>\rho_{\min}$ for all $t\ge 0$.
Moreover, if for some $\lambda\ge 0$,
\begin{equation}\label{eq:t_omega_p}
\lim_{t\to\infty} t^{1+\lambda}\,\omega_{\mathrm{p}}(t)=0,
\end{equation}
then the spinaway objective is achieved:
\begin{equation}
\lim_{t\to\infty}\delta(t)=0
\quad \text{and} \quad
\lim_{t\to\infty}\phi(t)=0.
\end{equation}
Finally, the speed is uniformly bounded as
\begin{equation}\label{eq:bound_ve}
    (1+\varepsilon)v_\mathrm{p}\le v_\mathrm{e}(t)< \sqrt{2}(1+\varepsilon)v_\mathrm{p}, \quad \forall t\ge 0,
\end{equation}
and the steering input satisfies $\omega_\mathrm{e}\in L^\infty([0,\infty))$.

\end{theorem}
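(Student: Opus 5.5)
The plan mirrors the proof of Theorem~\ref{thm:2}. First I derive a closed-loop system in the variables $(\rho,\phi,\tilde\delta)$. Then I obtain safety from the safety-margin persistence lemma of the appendix applied to the pair $(\rho-\rho_{\min},|\tilde\delta|)$. Spinaway then follows from the endogenous time dilation of Lemma~\ref{lem:dilation}, used here on an infinite horizon with range growing linearly.

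\textbf{Step 1: closed loop.} I write $\delta=\tilde\delta+\arctan(\tanh(c_0\phi))$ and apply the identities \eqref{eq:cossin} with the sign of $\arctan$ reversed. With \eqref{eq:ve} this gives
\[
v_{\mathrm e}\cos\delta=(1+\varepsilon)v_{\mathrm p}\bigl(\cos\tilde\delta-\tanh(c_0\phi)\sin\tilde\delta\bigr),\qquad
v_{\mathrm e}\sin\delta=(1+\varepsilon)v_{\mathrm p}\bigl(\sin\tilde\delta+\tanh(c_0\phi)\cos\tilde\delta\bigr).
\]
Adding and subtracting as in \eqref{eq:14}, the closed loop becomes
\[
\dot\rho=g_\rho(\phi)+K_1,\qquad \dot\phi=\frac{1}{\rho}\bigl(g_\phi(\phi)-K_2\bigr)-\omega_{\mathrm p},\qquad \dot{\tilde\delta}=-\Bigl(k+\frac{c}{\rho-\rho_{\min}}\Bigr)\tilde\delta+a(\phi)\omega_{\mathrm p}.
\]
Here $g_\rho(\phi)=v_{\mathrm p}(1+\varepsilon+\cos\phi)\in[\varepsilon v_{\mathrm p},(2+\varepsilon)v_{\mathrm p}]$ and $g_\phi(\phi)=-v_{\mathrm p}\bigl((1+\varepsilon)\tanh(c_0\phi)+\sin\phi\bigr)$. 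The terms $K_1,K_2$ have the structure of \eqref{eq:K1}--\eqref{eq:K2}, with the bounds \eqref{eq:K1_bound} and \eqref{eq:K2_bound} in $\tilde\delta$ and $v_{\mathrm p}$. The $\tilde\delta$-equation follows because the $(a-1)(\cdot)/\rho$ term in \eqref{eq:98a} cancels exactly against $\dot\delta-a(\phi)\dot\phi$. Lemma~\ref{lem:1} gives $\phi\, g_\phi(\phi)<0$ for $\phi\neq0$. It follows that $|\dot\rho|\le L:=(3+2\varepsilon)v_{\mathrm p}$ and
\[
\dot\rho\ge \varepsilon v_{\mathrm p}-(1+\varepsilon)v_{\mathrm p}\bigl(|\tilde\delta|+\tilde\delta^2/2\bigr).
\]
The right-hand side is locally Lipschitz on $\{\rho>\rho_{\min}\}$, so a unique maximal solution exists. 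It can only cease to exist if $\rho\downarrow\rho_{\min}$, since $\rho\le\rho_0+Lt$ rules out blow-up.

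\textbf{Step 2: safety.} Set $r:=\rho-\rho_{\min}$ and $w:=|\tilde\delta|$. Then
\[
\dot w\le-\Bigl(k+\frac{c}{r}\Bigr)w+c_0\omega_M,\qquad \dot r\ge -L,\qquad \dot r\ge q-\chi(w),
\]
with $q=\varepsilon v_{\mathrm p}$ and $\chi$ as in \eqref{eq:dictionary}. These are the hypotheses of the safety-margin persistence lemma. The conditions \eqref{eq:eta} and $k>c_0\omega_M/\eta$ give $c_0\omega_M/k<\eta\le\chi^{-1}(q/2)$, exactly as in Part~2 of the proof of Theorem~\ref{thm:2}. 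The mechanism I would check is the following. While $r$ decreases from some value $r_1$, we have $r(t)\ge r_1-L(t-t_1)$, hence $c/r\ge c/(r_1-L(t-t_1))$. Comparison then gives
\[
w(t)\lesssim w(t_1)\Bigl(1-\tfrac{L(t-t_1)}{r_1}\Bigr)^{c/L}+\text{(vanishing forcing term)}.
\]
Since $c/L>1$, the total range loss $\int\chi(w)\,dt$ during such an approach is at most a fraction of $r_1$. I expect this to be the main obstacle: the constants must be arranged so that this fraction is strictly less than one uniformly, which is where $c>(3+2\varepsilon)v_{\mathrm p}$ enters. The outcome is an explicit lower bound $r(t)\ge\underline r>0$, hence $\rho>\rho_{\min}$ and $T=\infty$. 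In addition, after the entry time $T_1$ at which $w<\eta$, we get $\dot\rho\ge\varepsilon v_{\mathrm p}/2$, so $\rho(t)\ge \underline r+\rho_{\min}+\tfrac{\varepsilon v_{\mathrm p}}{2}(t-T_1)$.

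\textbf{Step 3: spinaway.} Introduce the stretched time $s=\int_0^t\rho^{-1}$. Because $\rho\le\rho_0+Lt$, we have $s\to\infty$ as $t\to\infty$. In this time scale
\[
\frac{d\phi}{ds}=g_\phi(\phi)+d(s),\qquad d:=-K_2-\rho\,\omega_{\mathrm p}.
\]
The function $V=-\int_0^\phi g_\phi$ is an iISS Lyapunov function, exactly as in \eqref{eq:proper}--\eqref{eq:53b}. The budget identity gives $\int_0^\infty\rho|\omega_{\mathrm p}|\,ds=\int_0^\infty|\omega_{\mathrm p}|\,dt<\infty$, by \eqref{eq:t_omega_p} with $1+\lambda\ge1$. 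Next,
\[
\frac{d\tilde\delta}{ds}=-\Bigl(k\rho+\frac{c\rho}{\rho-\rho_{\min}}\Bigr)\tilde\delta+\rho\, a(\phi)\,\omega_{\mathrm p},
\]
which is a filter with decay rate at least $c$ and $L^1$ forcing, so $\tilde\delta\in L^1\cap L^\infty$ in $s$. Hence $K_2\in L^1$ and $d\in L^1$, and Lemma~\ref{lem:dilation}(iii)(a) yields $\phi\to0$. In $t$, $\dot{\tilde\delta}\le-k\tilde\delta+c_0|\omega_{\mathrm p}|$ with $\omega_{\mathrm p}\to0$ gives $\tilde\delta\to0$, so $\delta=\tilde\delta+\arctan(\tanh(c_0\phi))\to0$.

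\textbf{Step 4: input bounds.} The bound \eqref{eq:bound_ve} follows from $|\tanh|<1$. For $\omega_{\mathrm e}$, every term in \eqref{eq:98a} is bounded: $\rho\ge\rho_{\min}+\underline r$ bounds the $1/\rho$ and $1/(\rho-\rho_{\min})$ factors, the numerators are bounded, and $\tilde\delta$ is bounded. Unlike Theorem~\ref{thm:2}, no $q_1,q_2$ argument is needed here.
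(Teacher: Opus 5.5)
Your Step~3 has a genuine gap. Lemma~\ref{lem:dilation}(iii)(a), the integrable-input branch, needs $d\in L^1({\rm d}s)$. Since ${\rm d}s={\rm d}t/\rho$, both $\int_0^\infty\rho|\omega_{\mathrm p}|\,{\rm d}s$ and the $L^1$ forcing of your $\tilde\delta$-filter reduce to $\omega_{\mathrm p}\in L^1([0,\infty))$. But \eqref{eq:t_omega_p} with $\lambda=0$ only gives $\omega_{\mathrm p}=o(1/t)$, which is not integrable. For example, $\omega_{\mathrm p}(t)=\min\{\omega_M,1/((t+2)\ln(t+2))\}$ satisfies $t\,\omega_{\mathrm p}(t)\to0$ but $\int_0^\infty\omega_{\mathrm p}\,{\rm d}t=\infty$. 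Every condition with $\lambda>0$ implies the one with $\lambda=0$, so $\lambda=0$ is the theorem's general case. Your argument covers only $\lambda>0$, where it is a valid alternative route.

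iISS by itself does not give convergence under vanishing but non-integrable inputs. The fix is to use the vanishing-input branch, Lemma~\ref{lem:dilation}(iii)(b), as the paper does:
\begin{itemize}
\item $|\rho\,\omega_{\mathrm p}|\le(\rho_0+Lt)|\omega_{\mathrm p}|\to0$. The at-most-linear growth of $\rho$ consumes exactly one power of $t$, which is why the hypothesis is phrased with $t^{1+\lambda}$.
\item $K_2\to0$, because $\tilde\delta\to0$ (which you already obtain from the $t$-domain estimate) and the bound on $K_2$ is uniform in $\phi$.
\item Lemma~\ref{lem:1} gives $\phi\,g_\phi(\phi)<0$ for $\phi\ne0$ and $\liminf_{|\phi|\to\infty}|g_\phi(\phi)|=\varepsilon v_{\mathrm p}>0$. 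This is what replaces integrability.
\end{itemize}

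In Step~2, invoking Lemma~\ref{lem:safety} with $B=c_0\omega_M$, $\mu=\varepsilon v_{\mathrm p}$, $\kappa=(1+\varepsilon)v_{\mathrm p}$, $L=(3+2\varepsilon)v_{\mathrm p}$ is exactly the paper's argument. Your hypothesis check is correct, with $\chi^{-1}(q/2)=\sqrt{1+\mu/\kappa}-1=e^\star$, and Steps~1 and~4 also match the paper.

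The mechanism you sketch for the lemma would not prove it, however. First, $r(t)\ge r_1-L(t-t_1)$ yields $c/r\le c/(r_1-L(t-t_1))$, which is the wrong direction for a comparison bound on $w$. Second, your loss estimate scales with $w(t_1)$ and $w(t_1)^2$, which are not uniformly small because $\tilde\delta_0$ is arbitrary.

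The lemma's actual device is the ratio $\zeta=\tilde\delta/r$ in margin time $\sigma=\int_0^t r^{-1}\,{\rm d}\tau$. It satisfies $\frac{\rm d}{{\rm d}\sigma}|\zeta|\le-(c+kr+\dot r)|\zeta|+B\le-(c-L)|\zeta|+B$. Hence $|\tilde\delta|\le\zeta_M r$, the destabilizing term in $\dot r$ is $O(r)$ as $r\to0^+$, and $\dot r\ge\mu/2$ whenever $r\le r_\star$. This is where $c>(3+2\varepsilon)v_{\mathrm p}$ enters.

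The correct-direction version of your idea is a contradiction argument. If $r\to0$ at a finite $T^\ast$, integrating $\dot r\ge-L$ backward gives $r(t)\le L(T^\ast-t)$. Then $w\to0$ as in Step~3 of the proof of Lemma~\ref{lem:capture}, which forces $\dot r\ge\mu/2$ near $T^\ast$, a contradiction. This gives persistence qualitatively, but not the explicit $\underline r$.
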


\begin{proof}[Theorem \ref{thm:3}]~

\paragraph{1. Closed-loop system and safety margin.}

Parallel to the proof of Theorem~\ref{thm:2}, define 
\begin{subequations}
\begin{eqnarray}
\Psi_\rho(\phi)
&:=&
v_{\mathrm p}(1+\varepsilon+\cos\phi),\\
\Psi_\phi(\phi)
&:=&
-v_{\mathrm p}\left((1+\varepsilon)\tanh(c_0\phi)+\sin\phi\right),\\
\Delta_\rho(\phi,\tilde\delta)
&:=&
(1+\varepsilon)v_{\mathrm p}
\left[
(\cos\tilde\delta-1)-\tanh(c_0\phi)\sin\tilde\delta
\right],\\
\Delta_\phi(\phi,\tilde\delta)
&:=&
(1+\varepsilon)v_{\mathrm p}
\left[
\tanh(c_0\phi)(1-\cos\tilde\delta)-\sin\tilde\delta
\right].
\end{eqnarray}
\end{subequations}
Substituting \eqref{eq:ve}, \eqref{eq:98b}, and \eqref{eq:98c} into \eqref{eq:95a}-\eqref{eq:95b}, and using the identities \eqref{eq:cossin}, the closed-loop $\rho$- and $\phi$-dynamics can be written as
\begin{subequations}
\begin{eqnarray}
\dot\rho
&=&
\Psi_\rho(\phi)+\Delta_\rho(\phi,\tilde\delta), \label{eq:rho_evader}\\
\dot\phi
&=&
\frac{1}{\rho}
\left[
\Psi_\phi(\phi)+\Delta_\phi(\phi,\tilde\delta)
\right]
-\omega_{\mathrm p},
\end{eqnarray}
\end{subequations}
where the residual terms satisfy
\begin{equation}
    \max\left\{|\Delta_\rho(\phi,\tilde\delta)|,|\Delta_\phi(\phi,\tilde\delta)|\right\}\le (1+\varepsilon)v_{\mathrm p}
\left(|\tilde\delta|+\frac{\tilde\delta^2}{2}\right). \label{eq:Delta_rho}
\end{equation}

Let $r(t):=\rho(t)-\rho_{\min}$.
Then $r(0)>0$, and the feedback law is locally Lipschitz on the open domain $\rho>\rho_{\min}$, hence the closed-loop solution exists and is unique on a maximal interval $[0,T_{\max})$ on which $r(t)>0$.
On $[0,T_{\max})$, differentiating $\tilde{\delta}$ along the closed-loop trajectories gives
\begin{equation}
\dot{\tilde\delta}
=
-\left(k+\frac{c}{r}\right)\tilde\delta
+
a(\phi)\omega_{\mathrm p}
\end{equation}
Since $a(\phi)\in(0,c_0]$, the comparison principle yields the ISS estimate
\begin{equation}\label{eq:ISS_tilde_delta}
\boxed{
|\tilde\delta(t)|
\le
e^{-kt}|\tilde\delta_0|
+
c_0\int_{0}^{t}
e^{-k(t-\tau)}
|\omega_{\mathrm p}(\tau)|\,{\rm d}\tau,
\quad t\in[0,T_{\max}) .}
\end{equation}
We now invoke Lemma~\ref{lem:safety} (Appendix~\ref{app:safety}) on $[0,T_{\max})$ with margin $r=\rho-\rho_{\min}$ and error $e=\tilde\delta$. Hypothesis \eqref{eq:saf-e} holds with $B=c_0\omega_M$: since $|a(\phi)|\le c_0$ and $|\omega_{\mathrm p}|\le\omega_M$, the $\tilde\delta$-dynamics give ${\rm d}|\tilde\delta|/{{\rm d}t}\le-(k+\frac{c}{r})|\tilde\delta|+c_0\omega_M$. Hypothesis \eqref{eq:saf-r} holds with $\mu=\varepsilon v_{\mathrm p}$, $\kappa=(1+\varepsilon)v_{\mathrm p}$, and $L=(3+2\varepsilon)v_{\mathrm p}$: from \eqref{eq:rho_evader}, $\dot r=\Psi_\rho(\phi)+\Delta_\rho(\phi,\tilde\delta)$, where $\Psi_\rho\ge\varepsilon v_{\mathrm p}$ and, by \eqref{eq:Delta_rho}, $|\Delta_\rho|\le(1+\varepsilon)v_{\mathrm p}(|\tilde\delta|+\tilde\delta^2/2)$, so that $\dot r\ge\varepsilon v_{\mathrm p}-(1+\varepsilon)v_{\mathrm p}(|\tilde\delta|+\tilde\delta^2/2)$, while $\Psi_\rho+\Delta_\rho\ge-(3+2\varepsilon)v_{\mathrm p}$ gives $\dot r\ge-L$; the requirement $c>L$ is the standing assumption $c>(3+2\varepsilon)v_{\mathrm p}$. The gain condition reads $c_0\omega_M/k<\eta$, i.e., $k>c_0\omega_M/\eta$, with $e^\star=\eta$ fixed by \eqref{eq:eta}.

Lemma~\ref{lem:safety} therefore applies on $[0,T_{\max})$. With $\nu:=c-(3+2\varepsilon)v_{\mathrm p}$, $\underline{r}:=\min\{r(0),r_\star\}$,
\begin{equation*}
\zeta_M:=\max\left\{\frac{|\tilde\delta_0|}{\rho_0-\rho_{\min}},\,\frac{c_0\omega_M}{\nu}\right\},\quad
r_\star:=\frac{1}{\zeta_M}\left(\sqrt{1+\frac{\varepsilon}{1+\varepsilon}}-1\right),\quad
T_\eta:=\frac{1}{k}\max\left\{0,\ln\frac{|\tilde\delta_0|}{\eta-c_0\omega_M/k}\right\},
\end{equation*}
Lemma~\ref{lem:safety}(ii) gives the bounded ratio $|\tilde\delta(t)|\le\zeta_{M}\,r(t)$ on $[0,T_{\max})$, and Lemma~\ref{lem:safety}(iii) gives $r(t)\ge\underline r>0$ on $[0,T_{\max})$. Hence $\rho(t)\ge\rho_{\min}+\underline r>\rho_{\min}$ throughout $[0,T_{\max})$, and since the other state components $(\delta,\phi)$ are bounded on bounded $t$-intervals (the closed-loop right-hand side is bounded whenever $r$ is bounded away from $0$), no finite-time blow-up occurs at $T_{\max}$; therefore $T_{\max}=\infty$. The linear escape bound in Lemma~\ref{lem:safety}(iii) then gives the explicit safety bound on all of $[0,\infty)$:
\begin{equation}\label{eq:safety}
\boxed{\rho(t)\ge\underline\rho(t):=\left\{
\begin{aligned}
&\rho_{\min}+\underline{r},
& 0\le t<T_\eta,\\
&\rho_{\min}+\underline{r}+\dfrac{\varepsilon v_{\mathrm p}}{2}(t-T_\eta),
& t\ge T_\eta .
\end{aligned}
\right.
}
\end{equation}
This shows that the range is uniformly separated from $\rho_{\min}$ for all time and grows at least linearly after the finite reorientation transient.

\paragraph{2. Asymptotic spinaway property.}

Following \eqref{eq:t_omega_p}, the ISS estimate \eqref{eq:ISS_tilde_delta} implies $\tilde\delta(t)\to 0$.
Indeed, \eqref{eq:ISS_tilde_delta} is a convolution estimate with an exponentially decaying kernel, and hence the response of the exponentially stable $\tilde\delta$-subsystem to a vanishing input also vanishes. Consequently, since $\Delta_\phi(\phi,\tilde\delta)$ is globally bounded in $\phi$ and vanishes uniformly with respect to $\phi$ as $\tilde\delta\to0$, one has $\Delta_\phi(\phi(t),\tilde\delta(t))\to 0$.

We obtain $\phi(t)\to0$ from Lemma~\ref{lem:dilation} (Appendix~\ref{app:dilation}), applied to the $\phi$-subsystem written as $\dot\phi=\left(\Psi_\phi(\phi)+p_\phi(t)\right)/\rho$ with $f=\Psi_\phi$ and disturbance $p_\phi:=\Delta_\phi(\phi,\tilde\delta)-\rho\,\omega_{\mathrm p}$. Its hypotheses hold:

\par\medskip\noindent
\textbf{Exhausted clock (diverging regime).} Since $\Psi_\rho$ and $\Delta_\rho$ are globally bounded, so is $\dot\rho$, and $\rho(t)\le\rho_0+R_\rho t$ with $R_\rho:=(3+2\varepsilon)v_{\mathrm p}$; the range thus diverges at most linearly, so $\int_0^\infty\rho^{-1}(t){\rm d}t=\infty$, and Lemma~\ref{lem:dilation}(i) provides the stretched time $s(t)=\int_0^t\rho^{-1}(\tau){\rm d}\tau\to\infty$ and the desingularized dynamics $\frac{{\rm d}\phi}{{\rm d}s}=\Psi_\phi(\phi)+d_\phi(s)$, where $d_\phi(s):=p_\phi(t(s))$.

\par\medskip\noindent
\textbf{Vanishing input.} The term $\Delta_\phi(\phi,\tilde\delta)\to0$, as shown above. For the other term, using $\rho(t)\le\rho_0+R_\rho t$ and \eqref{eq:t_omega_p},
\begin{equation*}
|\rho(t)\omega_{\mathrm p}(t)|\le\left(\rho_0+R_\rho t\right)|\omega_{\mathrm p}(t)|
=\left(\frac{\rho_0}{t^{1+\lambda}}+\frac{R_\rho}{t^{\lambda}}\right)\big|t^{1+\lambda}\omega_{\mathrm p}(t)\big|\to0 .
\end{equation*}
Hence $d_\phi(s)\to0$ as $s\to\infty$.

\par\medskip\noindent
\textbf{Sign and liminf at infinity for $\Psi_\phi$.} By Lemma~\ref{lem:1}, $\phi\,\Psi_\phi(\phi)<0$ for $\phi\ne0$ and $\liminf_{|\phi|\to\infty}|\Psi_\phi(\phi)|\ge\varepsilon v_{\mathrm p}>0$.

Lemma~\ref{lem:dilation}(iii)(b) therefore applies and gives $\phi(t)\to0$ as $t\to\infty$.
Finally, since $\tilde\delta(t)\to0$ and $\delta^*(\phi)=\arctan(\tanh(c_0\phi))\to0$ as $\phi\to0$, we conclude that $\delta(t) \to 0$ as $t\to\infty$.

\paragraph{3. Boundedness of inputs.}
The estimation \eqref{eq:bound_ve} holds directly from the fact that $|\tanh(\cdot)|<1$. For the boundedness of the steering input $\omega_{\mathrm{e}}$, the first term of \eqref{eq:98a} is bounded because $\rho(t)\ge\rho_{\min}+\underline{r}$ and $v_{\mathrm e}$ is bounded. The second term is bounded because $\tilde\delta$ is bounded and $\zeta:={\tilde\delta}/{(\rho-\rho_{\min})}\in L^\infty$. 
Therefore, $\omega_{\mathrm e}\in L^\infty([0,\infty))$. \qed

\end{proof}

\section{Simulation results}\label{sec:simulation}

The simulations illustrate the transient mechanisms predicted by the analysis: reorientation from unfavorable initial headings, possible nonmonotonicity of the range, finite-time capture in pursuit, and safety-margin preservation in evasion.
The gains are selected to make the transient mechanisms visible. Higher gains produce faster but less informative trajectories. Angular variables are plotted as continuous representatives, obtained by unwrapping. All quantities are reported in SI units.

\subsection{Directional pursuit of a maneuvering evader}

\begin{figure}
    \centering
    \begin{subfigure}[t]{0.95\textwidth}
        \centering
        \includegraphics[height=0.25\textheight, width=\linewidth, keepaspectratio]{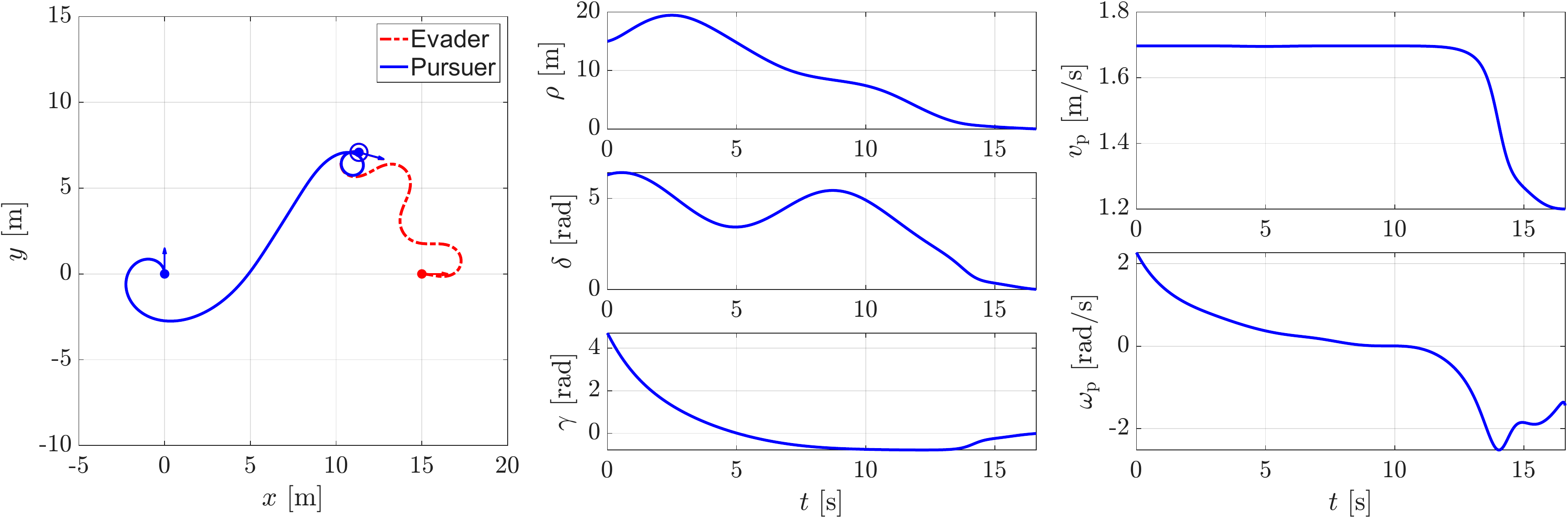}
        \caption{Case 1: Unfavorable initial heading condition, with $\gamma_0=-\pi/2$ and $\omega_{\mathrm e}(t)=1.4\sin(0.7t)-0.6\cos(0.5t)$.}
        \label{fig:pursuit-2}
        \vspace{0.2cm}
    \end{subfigure}

    \begin{subfigure}[t]{0.95\textwidth}
        \centering
        \includegraphics[height=0.25\textheight, width=\linewidth, keepaspectratio]{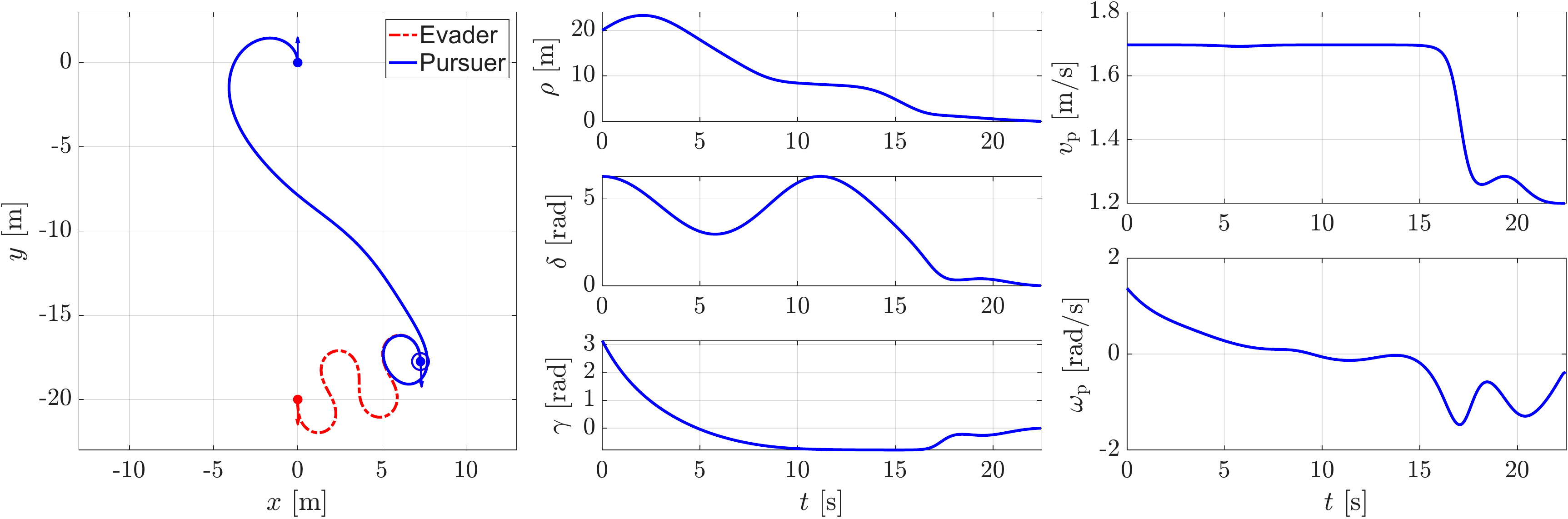}
        \caption{Case 2: Unfavorable initial heading condition, with $\gamma_0=\pi$ and $\omega_{\mathrm e}(t)=\sin(0.55t)$.}
        \label{fig:pursuit-3}
        \vspace{0.2cm}
    \end{subfigure}
    \caption{Simulations illustrating the directional pursuit control in Theorem~\ref{thm:2}. The circles indicate the capture location.}
    \label{fig:pursuit_simulation}
\end{figure}

We illustrate the pursuit controller of Theorem~\ref{thm:2} in configurations that violate the favorable initial-heading condition\footnote{Following \cite{krstic2026directional}, a favorable pursuit initial condition means $\cos\gamma_0>0$, and a favorable evasion initial condition means \(\cos\delta_0>0\).} of the half-global result in \cite{krstic2026directional}. The evader moves at $v_e=1$ with $|\omega_e|\le \omega_M=2$. The pursuer uses $\varepsilon=0.2$, $c_0=1$, $c=5$, and $k=0.1$. These gains are chosen to display the reorientation transient clearly.

The pursuer starts at $(0,0,\pi/2)$. In Case 1, the evader starts at $(15,0,0)$ with $\omega_e(t)=1.4\sin(0.7t)-0.6\cos(0.5t)$, giving $\gamma_0=-\pi/2$. In Case 2, the evader starts at $(0,-20,-\pi/2)$ with $\omega_e(t)=\sin(0.55t)$, giving the strongly unfavorable value $\gamma_0=\pi$. Figure 2 shows that in both cases the pursuer reorients, the range $\rho$ exhibits the predicted nonmonotone transient, and finite-time capture is achieved with $\delta,\gamma\to0$.

\subsection{Evasion against a maneuvering pursuer}

\begin{figure}
    \centering
    \begin{subfigure}[t]{0.95\textwidth}
        \centering
        \includegraphics[height=0.25\textheight, width=\linewidth, keepaspectratio]{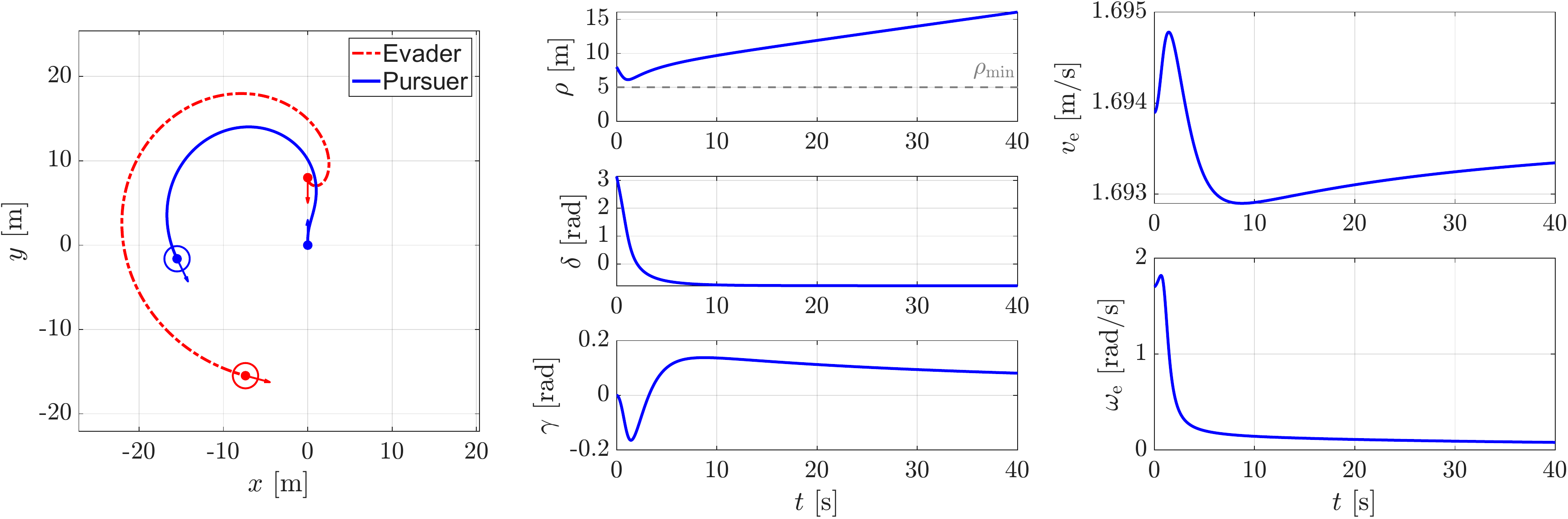}
        \caption{Case 1: Strongly unfavorable initial configuration with the pursuer using \(\omega_{\mathrm p}=\tanh(\gamma)\).}
        \label{fig:evasion-1}
        \vspace{0.2cm}
    \end{subfigure}

    \begin{subfigure}[t]{0.95\textwidth}
        \centering
        \includegraphics[height=0.25\textheight, width=\linewidth, keepaspectratio]{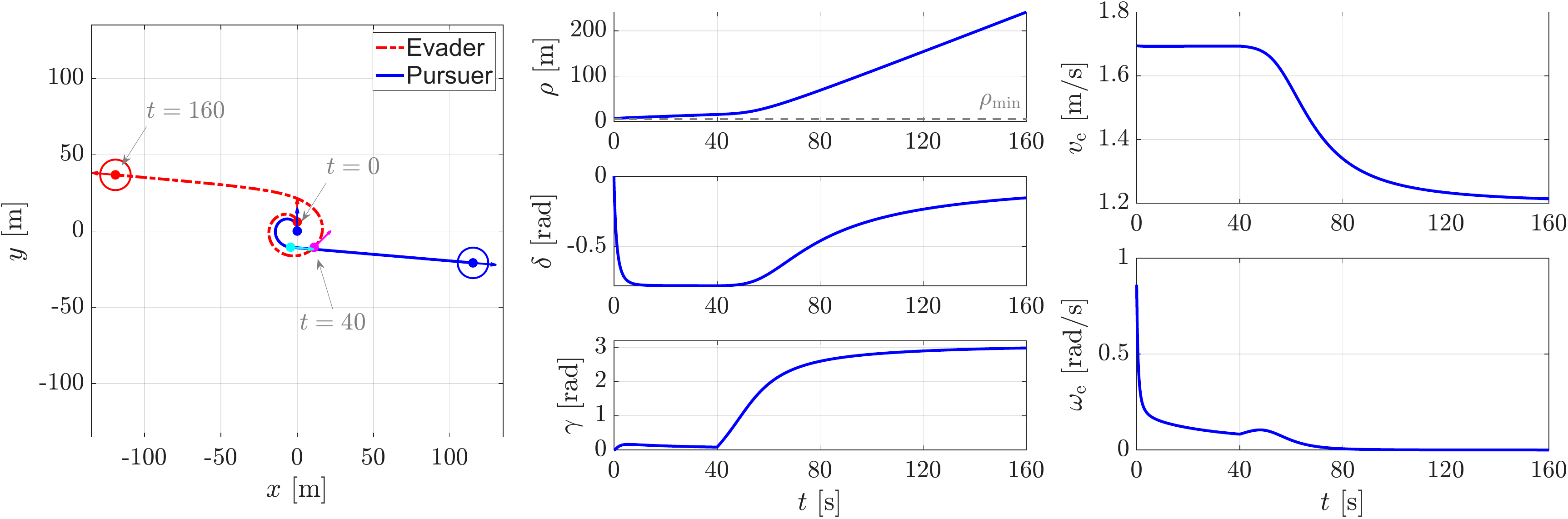}
        \caption{Case~2: The pursuer uses \(\omega_{\mathrm p}=\tanh(\gamma)\) for \(t\le40\) and \(\omega_{\mathrm p}=0\) for \(t>40\). The magenta and cyan arrows mark the positions of the evader and pursuer, respectively, at \(t=40\).}
        \label{fig:evasion-2}
        \vspace{0.2cm}
    \end{subfigure}
    \caption{Simulations illustrating the directional evasion control law in Theorem~\ref{thm:3}. 
The evader is shown in red and the pursuer in blue; dots denote initial positions and circles denote terminal positions over the simulated horizon. 
The dashed horizontal line in the \(\rho\)-plots denotes \(\rho_{\min}=5\). }
    \label{fig:evasion_simulation}
\end{figure}

We next illustrate the evasion controller of Theorem~\ref{thm:3}. The pursuer moves at $v_p=1$ with $|\omega_p|\le \omega_M=1$, and the prescribed safety margin is $\rho_{\min}=5$. The evader uses $\varepsilon=0.2$, $c_0=1$, $c=1$, and $k=0.1$. The gains are selected to make the safety transient visible and are below the conservative sufficient ranges used in the proof.

In Case 1, the evader starts at $(0,8,-\pi/2)$, the pursuer starts at $(0,0,\pi/2)$, and $\omega_p(t)=\tanh(\gamma(t))$, giving strongly unfavorable initial heading condition $\delta_0=\pi$. Since the pursuer keeps maneuvering, the decay condition required for spinaway is not imposed in this case. Figure 3(a) shows that the range initially approaches the safety margin, remains above $\rho_{\min}$, and then increases.

In Case 2, the evader starts at $(0,6,\pi/2)$, the pursuer starts at $(0,0,\pi/2)$, and the initial range $\rho_0=6$ is close to $\rho_{\min}=5$. The pursuer uses $\omega_p(t)=\tanh(\gamma(t))$ for $t\le40$ and $\omega_p(t)=0$ for $t>40$. Figure 3(b) shows that the safety margin is preserved throughout the maneuver; after the pursuer stops turning, the range grows and the angular variables converge to the spinaway configuration, $\delta\to0$ and $\gamma\to\pi$.

\section{Conclusion}\label{sec:conclusion}

This paper extends the directional pursuit-evasion designs of \cite{krstic2026directional} from favorable initial headings to arbitrary initial relative configurations, at the price of requiring a known uniform bound on the opponent’s steering input. The pursuit law guarantees finite-time capture with directional alignment, while the evasion law guarantees separation from a prescribed safety margin and achieves spinaway under an additional decay condition on the pursuer’s steering input.

The robustness mechanism shifts from ISS to iISS, which is captured by abstract auxiliary lemmas for finite-time coextinction, disturbance budgeting, and safety-margin persistence. These results clarify the transient cost of globality and may be useful for other control problems. Future work will address actuator constraints, and extensions to dynamic and multi-agent models.

\appendix
\setcounter{section}{0}
\renewcommand{\thesection}{\Alph{section}}

\newcommand{\appsection}[1]{%
  \refstepcounter{section}%
  \section*{Appendix \thesection.\ #1}%
}

\appsection{An elementary positivity lemma}\label{app:lemma1}

\begin{lemma}\label{lem:1}
Let $\varepsilon>0$ and define
\begin{equation}
    c_0^\star(\varepsilon):=\frac{1}{2\pi}\ln \frac{2+\varepsilon}{\varepsilon}.
\end{equation}
For any $c_0\ge c_0^\star(\varepsilon)$, define $g(x):=(1+\varepsilon)\tanh(c_0x)+\sin(x)$. Then, the function $x\mapsto g(x)^2$ is positive definite. Moreover, $\liminf_{|x|\to\infty}|g(x)|=\varepsilon$.
\end{lemma}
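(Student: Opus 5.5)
The plan is a direct elementary verification, splitting the positive half-line at $x=\pi$. The argument has three parts: reduce to $x>0$ by oddness, check positivity separately on $(0,\pi]$ and on $(\pi,\infty)$, and then compute the liminf.

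\textbf{Reduction by symmetry.} Both $\tanh(c_0x)$ and $\sin x$ are odd, so $g$ is odd. Hence $g(x)^2$ is even, $g(0)^2=0$, and $|g(-x)|=|g(x)|$. It therefore suffices to show $g(x)>0$ for every $x>0$, and to compute $\liminf_{x\to+\infty}|g(x)|$.

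\textbf{Positivity on $(0,\pi]$.} For $x\in(0,\pi]$ we have $\tanh(c_0x)>0$ and $\sin x\ge 0$. Hence $g(x)\ge(1+\varepsilon)\tanh(c_0x)>0$.

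\textbf{Positivity on $(\pi,\infty)$.} This is the only step where the threshold $c_0^\star$ enters, and it is the one to check carefully. Set $y=c_0x$. Then
\begin{equation*}
(1+\varepsilon)\tanh y>1
\iff (1+\varepsilon)(e^{2y}-1)>e^{2y}+1
\iff \varepsilon e^{2y}>2+\varepsilon
\iff y>\tfrac12\ln\tfrac{2+\varepsilon}{\varepsilon}.
\end{equation*}
Take $x>\pi$ and $c_0\ge c_0^\star(\varepsilon)$. Then $y=c_0x>c_0^\star\pi=\tfrac12\ln\frac{2+\varepsilon}{\varepsilon}$, so $(1+\varepsilon)\tanh(c_0x)>1$. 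Consequently $g(x)>1+\sin x\ge 0$. Combined with the previous step, $g(x)>0$ for all $x>0$. Therefore $g^2$ vanishes only at $0$, i.e.\ it is positive definite.

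\textbf{The liminf.} For $x>\pi$ we have the lower bound
\begin{equation*}
|g(x)|=g(x)\ge(1+\varepsilon)\tanh(c_0x)-1,
\end{equation*}
and the right-hand side tends to $\varepsilon$. This gives $\liminf_{x\to\infty}|g(x)|\ge\varepsilon$. For the reverse inequality, take the sequence $x_n=3\pi/2+2n\pi$. Along it $g(x_n)=(1+\varepsilon)\tanh(c_0x_n)-1\to\varepsilon$. Hence the liminf equals $\varepsilon$, and by oddness the same holds as $x\to-\infty$.
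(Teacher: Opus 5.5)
Your proof is correct and follows the paper's argument: reduce to $x>0$ by oddness, use nonnegativity of both summands on $(0,\pi]$, and use the threshold $c_0^\star$ to get $(1+\varepsilon)\tanh(c_0x)\ge 1$ beyond $\pi$, so that $g(x)\ge 1+\sin x\ge 0$. Your version is slightly tighter in two places: you get strict inequality for all $x>\pi$ directly from $c_0x>c_0^\star\pi$, which avoids the paper's separate handling of the zeros of $1+\sin x$, and your sequence $x_n=3\pi/2+2n\pi$ makes explicit the upper bound on the $\liminf$ that the paper leaves implicit.
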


\begin{proof}
By oddness, it suffices to show $g(x)>0$ on $(0,\infty)$. The choice of $c_0^\star$ gives $(1+\varepsilon)\tanh(c_0\pi)\ge 1$. On $(0,\pi]$, both summands are nonnegative with $\tanh(c_0x)>0$, so $g(x)>0$. On $[\pi,\infty)$, monotonicity of $\tanh$ gives $(1+\varepsilon)\tanh(c_0x)\ge 1$, so $g(x)\ge 1+\sin x\ge 0$, with strict inequality everywhere outside the zeros of $1+\sin x$ (which lie strictly inside this interval, where $\tanh$ is strictly larger). The gap follows from $(1+\varepsilon)\tanh(c_0x)\to 1+\varepsilon$ as $x\to\infty$ and oddness. \qed
\end{proof}

\appsection{Finite-time coextinction of two mutualistic populations}\label{app:capture}

Read $(\rho,w)$ as a cost-free mutualism: $w$ supports $\rho$ through $\chi(w)$, $\rho$ shelters $w$ by suppressing the mortality $k+c/\rho$. Despite this, an intrinsic decay $-q$ in $\rho$ and a singular mortality $c/\rho$ in $w$ force finite-time coextinction (Fig.~\ref{fig:coextinction}); in Theorem~\ref{thm:2}, $\rho$ is the range and $w=|\tilde\gamma|$ the reorientation error. 

\begin{figure}
\centering
\begin{tikzpicture}[scale=0.62]
  \draw[->,line width=0.5pt] (0,0) -- (5.0,0) node[anchor=north,inner sep=3pt] {$w$};
  \draw[->,line width=0.5pt] (0,0) -- (0,4.7) node[anchor=east,inner sep=3pt] {$\rho$};
  \draw[line width=1.6pt,teal,
        postaction={decorate},
        decoration={markings,
          mark=at position 0.3 with {\arrow{Stealth[length=3mm]}}}]
    (4.3,2.0) .. controls (3.3,2.3) and (2.5,3.9) ..
    (1.95,4.1) .. controls (1.25,4.3) and (0.85,1.6) ..
    (0.72,0.85) .. controls (0.6,0.36) and (0.32,0.09) .. (0,0);
  \fill[teal] (4.3,2.0) circle (2pt) node[anchor=west,inner sep=4pt] {$(w_0,\rho_0)$};
  \fill[teal!60!black] (0,0) circle (2.6pt);
  \node[anchor=north west,inner sep=4pt] at (-0.35,0) {$(0,0)$};
\end{tikzpicture}
\caption{A representative trajectory of the pair $(w,\rho)$ in Lemma~\ref{lem:capture}: starting from $(w_0,\rho_0)$, $\rho$ overshoots and then decays to zero in finite time while $w$ decreases monotonically, the trajectory ending at the origin (coextinction). The arrowhead marks the direction of flow.}
\label{fig:coextinction}
\end{figure}

\begin{lemma}[Global finite-time stability of two mutualistic populations]\label{lem:capture}
Let $w,\rho:[0,T)\to [0,\infty)$ be absolutely continuous, with $\rho>0$ on $[0,T)$, where $[0,T)$ is the maximal interval of positivity of $\rho$. 
Suppose that there exist constants $k,c,q,L>0$, $b\ge0$, and a function $\chi\in\mathcal{K}$ satisfying $\sup_{r\ge 0}\chi(r)>q$, such that, almost everywhere,
\begin{subequations}
    \begin{eqnarray}
  \dot w &\le& -\left(k+\frac{c}{\rho}\right)w+b, \label{eq:lem-w}\\
  -L &\le& \dot\rho \;\le\; -q+\chi(w). \label{eq:lem-rho}
\end{eqnarray}
\end{subequations}
Assume the gain condition $\underline w:=b/k<\chi^{-1}(q/2)=:w^\star$.
Denote $w_0:=w(0)$ and $\rho_0:=\rho(0)$. Define $\alpha:={c}/{L}$,
\begin{equation}\label{eq:lem-data}
  T_1:=\max\left\{0,\frac{1}{k}\ln\frac{\max\{w_0,\underline w\}-\underline w}{w^\star-\underline w}\right\},
  \quad
  \Omega(w_0):=\left\{\begin{aligned}
      &0, & w_0\le\chi^{-1}(q),\\
      &\frac{1}{k}\int_{\chi^{-1}(q)}^{w_0}\frac{\chi(u)-q}{u-\underline w}\,{\rm d}u,& w_0>\chi^{-1}(q),
  \end{aligned}  \right.
\end{equation}
\vspace{-0.2cm}
\begin{equation}\label{eq:lem-sigma}
  \Sigma(w_0,\rho_0):=\max\left\{w_0,\,\frac{b\left(\rho_0+\Omega(w_0)\right)}{c+k\left(\rho_0+\Omega(w_0)\right)}\right\},
\end{equation}
with the convention $\ln(0):=-\infty$, where $\Omega$ and $\Sigma$ are continuous and nondecreasing in their arguments satisfying $\Omega(0)=0$ and $\Sigma(0,0)=0$. Then,
\begin{itemize}
  \item[(i)] $w(t)\le\Sigma(w_0,\rho_0)$ for all $t\in[0,T)$;
  \item[(ii)] $w(t)\to0$ as $t\to T^-$;
  \item[(iii)] $\rho(t)\le\rho_0+\Omega(w_0)$ for all $t\in[0,T)$;
  \item[(iv)] $\rho(t)\to0$ as $t\to T^-$, 
\end{itemize}
where 
\begin{equation}\label{ineq-T}
T\le T_1+\dfrac{2}{q}\left(\rho_0+\Omega(w_0)\right)<\infty\,.
\end{equation}
\end{lemma}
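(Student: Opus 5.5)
The proof proceeds in four steps, taken in the order the statement suggests: first a comparison bound on $w$ that ignores $c/\rho$, then the range excursion and capture time, then a uniform envelope for $w$, and finally the terminal convergence of $w$.

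\textbf{Step 1 (majorant for $w$).} Since $c/\rho\ge 0$, inequality \eqref{eq:lem-w} gives $\dot w\le -k w+b$. The comparison principle then yields
\[
w(t)\le \underline w+(w_0-\underline w)e^{-kt}.
\]
Two consequences follow.
\begin{itemize}
\item If $w_0\le\underline w$, then $w(t)\le\underline w<w^\star$ for all $t$.
\item Otherwise $w(t)\le w^\star$ for all $t\ge T_1$, with $T_1$ as in \eqref{eq:lem-data}.
\end{itemize}
In either case $\chi(w(t))\le q/2$ on $[T_1,T)$, and therefore $\dot\rho\le -q/2$ there.

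\textbf{Step 2 (range excursion and capture time).} The excursion bound (iii) is obtained by integrating $\dot\rho\le -q+\chi(w)$ only over the times when $\chi(w)>q$, which are the times with $w>\chi^{-1}(q)$. Before computing, note that $\chi^{-1}(q)$ is well defined because $\sup\chi>q$. On that interval $w$ is controlled by the Step~1 majorant. I change variables $u=\underline w+(w_0-\underline w)e^{-kt}$, so that $\mathrm dt=-\mathrm du/(k(u-\underline w))$. Then
\[
\int (\chi(w)-q)^+\,\mathrm dt\le \frac1k\int_{\chi^{-1}(q)}^{w_0}\frac{\chi(u)-q}{u-\underline w}\,\mathrm du=\Omega(w_0).
\]
This uses monotonicity of $\chi$ to replace $w$ by its majorant. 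Since $\chi^{-1}(q)>\chi^{-1}(q/2)>\underline w$, the integrand is finite.

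With (iii) in hand, $\rho(T_1)\le\rho_0+\Omega(w_0)$. On $[T_1,T)$ we have $\dot\rho\le -q/2$, so positivity of $\rho$ forces
\[
T-T_1\le \frac{2(\rho_0+\Omega(w_0))}{q},
\]
which is \eqref{ineq-T}. In particular $T<\infty$.

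To get (iv), I use that $[0,T)$ is the maximal positivity interval and $\dot\rho\ge -L$ keeps $\rho$ Lipschitz. Hence $\rho$ extends continuously to $T$. If $\rho(T)$ were positive, positivity would persist past $T$, contradicting maximality; so $\rho(T)=0$. I plan to state this simply as the meaning of maximality.

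\textbf{Step 3 (uniform bound (i)).} Set $\bar\rho:=\rho_0+\Omega(w_0)$. By (iii), $k+c/\rho\ge k+c/\bar\rho$, so $\dot w\le -(k+c/\bar\rho)w+b$. Comparison then gives
\[
w(t)\le \max\Bigl\{w_0,\ \frac{b}{k+c/\bar\rho}\Bigr\}=\Sigma(w_0,\rho_0).
\]
Continuity and monotonicity of $\Omega$ and $\Sigma$ follow directly from the formulas.

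\textbf{Step 4 (terminal convergence (ii)), the main obstacle.} Since $\rho(T)=0$ and $\dot\rho\ge -L$, we have $\rho(t)\le L(T-t)$, hence $c/\rho\ge \alpha/(T-t)$. Dropping $k$, comparison gives
\[
w(t)\le \Bigl(\frac{T-t}{T}\Bigr)^{\alpha}w_0+b\int_0^t\Bigl(\frac{T-t}{T-\tau}\Bigr)^{\alpha}\mathrm d\tau .
\]
The first term tends to $0$ as $t\to T^-$. The integral equals
\[
(T-t)^\alpha\int_0^t (T-\tau)^{-\alpha}\,\mathrm d\tau,
\]
which I treat by cases on $\alpha$:
\begin{itemize}
\item for $\alpha\ne1$ it is $\bigl[(T-t)-(T-t)^\alpha T^{1-\alpha}\bigr]/(1-\alpha)$;
\item for $\alpha=1$ it is $(T-t)\ln\bigl(T/(T-t)\bigr)$.
\end{itemize}
Both vanish as $t\to T^-$ for every $\alpha>0$.

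The delicate point is that this argument uses the a priori finiteness of $T$ and $\rho(T)=0$ from Step~2. The estimate is therefore not circular, but I need to check that the comparison for the singular linear inequality is valid on $[0,T)$. It is: the coefficient is locally integrable there, which is all the comparison requires.
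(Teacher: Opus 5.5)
Your proposal is correct and follows the paper's proof essentially step for step: the comparison majorant $\bar w$, the excursion integral via the substitution $u=\bar w(t)$, the capture-time bound from $\dot\rho\le -q/2$ after $T_1$, and the terminal estimate from $\rho(t)\le L(T-t)$. Your constant-rate comparison with $k+c/\bar\rho$ in Step~3 is just a simpler form of the paper's variation-of-constants bound.

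There are two harmless slips. For $\alpha\ne1$ the closed form should be $\bigl[T^{1-\alpha}(T-t)^{\alpha}-(T-t)\bigr]/(1-\alpha)$; you have the sign reversed, but it still vanishes as $t\to T^-$. Also, $\dot\rho\ge -L$ alone does not make $\rho$ Lipschitz. The existence of $\lim_{t\to T^-}\rho(t)$ additionally needs the upper bound $\dot\rho\le -q+\chi(\max\{w_0,\underline w\})$, or simply monotonicity of $\rho+Lt$ together with (iii).
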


\noindent
\textit{Interpretation: Researcher and thankless topic.}
Read $\rho$ as a researcher's standing and $w$ as the effort he (she) spends on a thankless research topic. Each supports the other at no cost: the researcher's effort sustains his standing through $\chi(w)$, while his standing eases the burden $k+c/\rho$ on his effort. But standing also decays on its own at rate $q$, and once effort drops below the domination threshold, this decay can no longer be arrested. The researcher's standing reaches zero in finite time; when it does, $c/\rho\to\infty$ stops shelter his from his effort feeling burdensome, and his effort collapses with his standing.

\begin{proof}

\emph{Step 1 (Auxiliary bound).} Since $c/\rho>0$,  \eqref{eq:lem-w} gives $\dot w\le-kw+b$, and therefore, by the comparison principle,
\begin{equation*}
  w(t)\le\bar w(t):=\underline w+e^{-kt}\left(w_0-\underline w\right)\le\max\{w_0,\underline w\},\quad t\in[0,T).
\end{equation*}
The gain condition gives $\underline w<w^\star$. By the definition of $T_1$, the comparison trajectory $\bar{w}(t)$ enters the interval $[0,w^\star]$ no later than $T_1$, and remains there afterward. Therefore, $w(t)\le w^\star$ for all $t\ge T_1$.

\emph{Step 2 (Range bound and finite-time extinction of $\rho$).} From Step~1, for $t\ge T_1$ we have $\chi(w(t))\le\chi(w^\star)=q/2$, and \eqref{eq:lem-rho} yields $\dot\rho(t)\le-q/2$ for all $t\ge T_1$.
We next estimate the possible excursion of $\rho$ before this uniform decay phase. Since $\chi$ is increasing and $w(t)\le\bar w(t)$,
\begin{equation*}
  \rho(t)-\rho_0\le\int_0^t\left(\chi(\bar w(\tau))-q\right)\,{\rm d}\tau
  \le\int_0^\infty \max\left\{0,\chi(\bar w(\tau))-q\right\}\,{\rm d}\tau .
\end{equation*}
The integrand is positive only when $\bar w(\tau)>\chi^{-1}(q)$. If $w_0\le \chi^{-1}(q)$, the last integral is zero. If $w_0>\chi^{-1}(q)$, then $\bar w$ decreases through this interval and, using
$\dot{\bar w}=-k(\bar w-\underline w)$, the change of variables $u=\bar w(\tau)$ gives
\begin{equation*}
  \int_0^\infty \max\left\{\chi(\bar w(\tau))-q\right\}\,{\rm d}\tau
  =\frac1k\int_{\chi^{-1}(q)}^{w_0}\frac{\chi(u)-q}{u-\underline w}\,{\rm d}u=\Omega(w_0).
\end{equation*}
Thus, $\rho(t)\le\rho_0+\Omega(w_0)$ on $[0,T_1)$, while for $t\ge T_1$ the decay $\dot\rho\le-q/2$ keeps $\rho(t)\le\rho(T_1)\le\rho_0+\Omega(w_0)$; hence $\rho(t)\le\rho_0+\Omega(w_0)$ for all $t\in[0,T)$, which is~(iii). 

If $T>T_1$, the estimate $\dot\rho\le -q/2$ on $[T_1,T)$ gives
\begin{equation*}
    T-T_1\le \frac{2}{q}\rho(T_1)\le \frac{2}{q}(\rho_0+\Omega(w_0)).
\end{equation*}
If $T\le T_1$, this bound is trivial. Hence, \eqref{ineq-T} holds. Since $[0,T)$ is the maximal interval on which $\rho>0$, and since $\dot\rho$ is bounded on $[0,T)$ by the preceding bounds, $\rho(t)$ has a finite limit as $t\to T^-$. This limit cannot be positive; otherwise positivity could be extended beyond $T$. Therefore, $\rho(t)\to0$ as $t\to T^-$, which is~(iv).

\emph{Step 3 (Bound and coextinction of $w$).} Define $\lambda(t):=k+c/\rho(t)$ and $\Lambda(t):=\int_0^t\lambda(s){\rm d}s$. Variation-of-constants gives
\begin{equation*}
  w(t)\le w_0 e^{-\Lambda(t)}+\int_0^tb\,e^{-(\Lambda(t)-\Lambda(s))}{\rm d}s\le w_0 e^{-\Lambda(t)}+ \frac{b(\rho_0+\Omega(w_0))}{c+k(\rho_0+\Omega(w_0))}(1-e^{-\Lambda(t)}).
\end{equation*}
With $w_0\,e^{-\Lambda(t)}\le w_0$, this yields $w(t)\le\Sigma(w_0,\rho_0)$, proving~(i).

From $\dot\rho\ge-L$ and $\rho(T)=0$, $\rho(t)\le L(T-t)$, so $\lambda(t)\ge k+\alpha/(T-t)$. Substituting the sharper lower bound on $\lambda$ into the same variation-of-constants estimate,
\begin{equation*}
 w(t)\le w_0 e^{-kt}\left(\frac{T-t}{T}\right)^{\alpha}
        +b\int_0^t e^{-k(t-s)} \left(\frac{T-t}{T-s}\right)^{\alpha}\,{\rm d}s.
\end{equation*}
The first term vanishes; for the integral term, with $r=T-t$, $u=T-s$,
\begin{equation*}
  \int_0^t e^{-k(t-s)} \left(\frac{T-t}{T-s}\right)^{\alpha}{\rm d}s = r^\alpha\int_r^T e^{-k(u-r)}u^{-\alpha}{\rm d}u,
\end{equation*}
which is $\mathcal{O}(r^\alpha)$ if $\alpha<1$, $\mathcal{O}(r\ln(1/r))$ if $\alpha=1$, and $\mathcal{O}(r)$ if $\alpha>1$. Hence $w(t)\to 0$ as $t\to T^-$, proving~(ii). \qed
\end{proof}

In role and spirit, Lemma~\ref{lem:capture} is akin to Lemma~4.1 of \cite{karafyllis2025robust}, where a predator-prey structure was used to analyze stability outside a small-gain configuration.

\appsection{Endogenous time dilation and disturbance budget}\label{app:dilation}

A subsystem driven at the singular rate $1/\rho$ becomes regular under the change of time $s=\int_0^t \rho^{-1}$, which is a bijection of $[0,T)$ onto $[0,\infty)$ whenever $\int_0^T\rho^{-1}=\infty$. This occurs, for example, in the finite-time capture case when $\rho\to0$, and also in infinite-time spinaway regimes when $\rho\to\infty$ slowly enough, such as linear growth. In the finite-time case, $\rho$ has total mass $T$ in $s$, so any $\rho$-modulated bounded perturbation is integrable in the stretched time $s$.

\begin{lemma}[Endogenous-clock time dilation and disturbance budget]\label{lem:dilation}
Let $T\in(0,\infty]$, and let $\rho:[0,T)\to(0,\infty)$ be absolutely continuous and satisfy $\int_0^{T}\rho(\tau)^{-1}\,{\rm d}\tau=\infty$. Let $\delta:[0,T)\to\mathbb{R}$ satisfy, almost everywhere,
\begin{equation}\label{eq:dil-sys}
  \dot\delta=\frac{1}{\rho}\left(f(\delta)+p(t)\right),
\end{equation}
where $f:\mathbb{R}\to\mathbb{R}$ is locally Lipschitz and $f(0)=0$, and $p:[0,T)\to\mathbb{R}$. Define $s(t):=\int_0^t\rho(\tau)^{-1}\,{\rm d}\tau$. Then:
\begin{enumerate}
  \item[(i)] \emph{(Reduction)} The map $s$ is a strictly increasing bijection of $[0,T)$ onto $[0,\infty)$. In the $s$-variable, the singularly scaled system becomes the regular system
  \begin{equation}\label{eq:dil-desing}
    \frac{{\rm d}\delta}{{\rm d}s}=f(\delta)+d(s),\quad d(s):=p(t(s)),
  \end{equation}
  The hypothesis $\int_0^T\rho^{-1}(\tau){\rm d}\tau=\infty$ holds, in particular, if $T<\infty$ and $\rho(t)\le L(T-t)$ for some $L>0$ \emph{(a clock vanishing in finite time)}, or if $T=\infty$ and $\rho(t)\le L(1+t)$ for some $L>0$ \emph{(a clock diverging at most linearly)};
  \item[(ii)] \emph{(Disturbance budget in finite-time case)} If $T<\infty$, then
  \begin{equation}\label{eq:dil-mass}
    \int_0^\infty \rho(t(s))\,{\rm d}s=T.
  \end{equation}
  Consequently, for every signal $b\in L^\infty([0,\infty))$, $\rho b\in L^1([0,\infty))$. Moreover, if $w$ satisfies the scalar filter ${\rm d}w/{\rm d}s=-a(s)w+\rho b$, where $a(s)\ge\lambda>0$, $b\in L^\infty([0,\infty))$, and $w(0)\in\mathbb R$, then $w\in L^1([0,\infty))$;
  \item[(iii)] \emph{(Convergence)} One has $\delta(t)\to 0$ as $t\to T^-$ in either of the following cases:
  \begin{enumerate}
    \item[(a)] \emph{(Integrable input)} $d\in L^1([0,\infty))$, and the regular system \eqref{eq:dil-desing} admits an iISS Lyapunov estimate in the sense of Lemma~\ref{lem:iiss} (Appendix~\ref{app:iiss}) with linearly bounded supply. Namely, there exist a $\mathcal C^1$ proper positive-definite function $V$, a positive-definite function $\alpha$, and $C>0$ such that  ${\rm d}V/{\rm d}s\le-\alpha(|\delta|)+C|d|$ along \eqref{eq:dil-desing}. In this case,  in addition, there exist $\beta\in\mathcal{KL}$ and $\chi\in\mathcal{K_\infty}$ such that, for all $s\ge s_0\ge0$,
    \begin{equation}\label{eq:dil-est}
      |\delta(s)|\le\beta(|\delta(s_0)|,s-s_0)+\chi\left(\int_{s_0}^s|d(\tau)|{\rm d}\tau\right).
    \end{equation}

    \item[(b)] \emph{(Vanishing input)} $d(s)\to0$ as $s\to\infty$, and $f$ satisfies $xf(x)<0$ for all $x\ne0$, and $\liminf_{|x|\to\infty}|f(x)|>0$.
  \end{enumerate}
\end{enumerate}
\end{lemma}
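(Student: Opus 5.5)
For part (i), I will use that $s(t)=\int_0^t\rho^{-1}$ is absolutely continuous on compact subintervals of $[0,T)$ with $s'(t)=\rho(t)^{-1}>0$. Hence $s$ is strictly increasing, and its image is $[0,\lim_{t\to T^-}s(t))=[0,\infty)$ by the divergence hypothesis. The inverse $t(s)$ is then absolutely continuous with $dt/ds=\rho(t(s))$. The chain rule gives
\[
\frac{d\delta}{ds}=\rho\,\dot\delta=f(\delta)+p(t(s)).
\]
For the two sufficient conditions I simply integrate. In the finite-time case, $\int_0^T\frac{d\tau}{L(T-\tau)}=\infty$. In the diverging case, $\int_0^\infty\frac{d\tau}{L(1+\tau)}=\infty$.

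For part (ii), the substitution $t=t(s)$, $dt=\rho\,ds$, gives $\int_0^\infty\rho(t(s))\,ds=\int_0^T dt=T$. Then $\|\rho b\|_{L^1}\le T\|b\|_\infty$. For the filter I use the comparison principle on $|w|$:
\[
\frac{d|w|}{ds}\le-\lambda|w|+\rho|b|,
\]
so $|w(s)|\le e^{-\lambda s}|w(0)|+\int_0^s e^{-\lambda(s-\sigma)}\rho|b|\,d\sigma$. Integrating over $[0,\infty)$ and applying Fubini/Young gives
\[
\|w\|_{L^1}\le \frac{|w(0)|}{\lambda}+\frac{T\|b\|_\infty}{\lambda}.
\]

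For part (iii)(a), I will invoke Lemma~\ref{lem:iiss} to get the estimate \eqref{eq:dil-est}, taking $\chi$ to be the gain that comes from the linearly bounded supply $C|d|$. Convergence then follows by the standard iISS argument with an $L^1$ input, which I write out directly from the dissipation inequality.
\begin{itemize}
\item Integrating $dV/ds\le -\alpha(|\delta|)+C|d|$ shows $V(\delta(s))\le V(\delta(0))+C\|d\|_{L^1}$. Properness of $V$ then makes $\delta$ bounded, and $\int_0^\infty\alpha(|\delta|)\,ds<\infty$.
\item Since $\delta$ is bounded and $f$ is locally Lipschitz, $d\delta/ds=f(\delta)+d$ splits into a bounded part plus an $L^1$ part. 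So $\delta$ is uniformly continuous up to an $L^1$ perturbation.
\item A Barbalat-type argument then gives $\alpha(|\delta(s)|)\to0$. Alternatively, I apply \eqref{eq:dil-est} from a late start time $s_0$: the tail $\int_{s_0}^\infty|d|$ is small, and $\delta(s_0)$ stays in a bounded set.
\end{itemize}
The second route is cleaner. Given $\epsilon>0$, I pick $s_0$ with $\chi(\int_{s_0}^\infty|d|)<\epsilon/2$. Since $\int\alpha(|\delta|)<\infty$, I can also pick a time $s_1\ge s_0$ with $|\delta(s_1)|$ small. Restarting \eqref{eq:dil-est} at $s_1$ bounds $\limsup|\delta|$ by a small quantity. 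Translating back via $t(s)\to T^-$ gives $\delta(t)\to0$.

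For part (iii)(b), I use $V=\delta^2/2$, or more precisely a direct sign argument.
\begin{enumerate}
\item \emph{Boundedness.} Let $m:=\liminf_{|x|\to\infty}|f(x)|>0$. Since $xf(x)<0$, I have $f(x)\le -m/2$ for all large positive $x$, and symmetrically for negative $x$. Once $|d|<m/2$, say for $s\ge S$, $|\delta|$ cannot exceed $\max\{|\delta(S)|,R\}$ for a suitable $R$. So $\delta$ is bounded.
\item \emph{Convergence to zero.} Take any $\epsilon>0$. The quantity $\mu:=\min_{\epsilon\le|x|\le M}|f(x)|$ is positive by continuity and sign-definiteness. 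After $|d|<\mu/2$, I get $\frac{d}{ds}|\delta|\le-\mu/2$ whenever $|\delta|\ge\epsilon$. Hence $\delta$ enters $[-\epsilon,\epsilon]$ in finite $s$-time.
\item \emph{Staying close.} After entry, $\delta$ stays within $[-\epsilon,\epsilon]$ once $|d|$ is small enough, again by the same sign argument.
\end{enumerate}

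The main obstacle is (iii)(a). The delicate point is passing from the abstract iISS estimate to actual convergence with only an $L^1$ input, without assuming uniform bounds on $f$. I would handle it via the restart argument above, using integrability of $\alpha(|\delta|)$ to find good restart times.
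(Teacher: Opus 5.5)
Your proposal is correct and follows essentially the same route as the paper: the change of time with ${\rm d}t/{\rm d}s=\rho$ for (i), the mass identity plus comparison and Fubini for (ii), Lemma~\ref{lem:iiss} with a tail argument for (iii)(a), and a sign/drift argument for (iii)(b). The paper is only slightly leaner in two places. In (iii)(a) it needs no restart time $s_1$ with $|\delta(s_1)|$ small, since for fixed $s_0$ the term $\beta(|\delta(s_0)|,s-s_0)$ already vanishes as $s\to\infty$. In (iii)(b) it skips your separate boundedness step by working directly with $m_l:=\inf_{|x|\ge l}|f(x)|>0$.
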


\begin{proof}
By a slight abuse of notation, we denote $\rho(s):=\rho(t(s))$ and $\delta(s):=\delta(t(s))$.

\emph{(i)} Since $\rho(t)>0$, the map $s$ is strictly increasing, hence invertible with inverse $t(s)$. Moreover, by assumption, $s(t)\to\int_0^T\rho^{-1}(\tau){\rm d}\tau=\infty$ as $t\to T^-$. Thus, $s$ maps $[0,T)$ bijectively onto $[0,\infty)$.
Differentiating $\delta$ in $s$ and using ${\rm d}t/{\rm d}s=\rho(t(s))$ gives ${{\rm d}\delta}/{{\rm d}s}=\rho(t(s))\dot\delta(t(s))=f(\delta)+p(t(s))$, which is \eqref{eq:dil-desing}: the factor $1/\rho$ of \eqref{eq:dil-sys} is absorbed and no singularity remains. 
The two sufficient conditions imply divergence of the clock integral. Indeed, if $\rho(t)\le L(T-t)$ then $\int_0^t\rho^{-1}(\tau){\rm d}\tau\ge\frac1L\int_0^t\frac{{\rm d}\tau}{T-\tau}=\frac1L\ln\frac{T}{T-t}\to\infty$, whereas if $\rho(t)\le L(1+t)$ then $\int_0^t\rho^{-1}(\tau){\rm d}\tau\ge\frac1L\int_0^t\frac{{\rm d}\tau}{1+\tau}=\frac1L\ln(1+t)\to\infty$.

\emph{(ii)} For $T<\infty$, the change of variable ${\rm d}s={\rm d}t/\rho$ gives $\int_0^\infty\rho(t(s))\,{\rm d}s=\int_0^T{\rm d}t=T$, which is \eqref{eq:dil-mass}. 
Hence, for every $b\in L^\infty([0,\infty))$, $\int_0^\infty|\rho(s)b(s)|{\rm d}s\le\|b\|_\infty T<\infty$. Thus, $\rho b\in L^1([0,\infty))$.

For the filter output $w$, the transition factor obeys $\exp(-\int_l^s a(\tau){\rm d}\tau)\le e^{-\lambda(s-l)}$, so the comparison principle gives $|w(s)|\le e^{-\lambda s}|w_0|+\int_0^s e^{-\lambda(s-l)}\rho(l)|b(l)|\,{\rm d}l$. Integrating and exchanging the order of integration yield
\begin{equation*}
  \int_0^\infty|w(s)|\,{\rm d}s
  \le\frac{|w_0|}{\lambda}+\int_0^\infty\rho(l)|b(l)|\left(\int_l^\infty e^{-\lambda(s-l)}\,{\rm d}s\right){\rm d}l
  =\frac{|w_0|}{\lambda}+\frac1\lambda\int_0^\infty\rho(l)|b(l)|\,{\rm d}l\le\frac{|w_0|}{\lambda}+\frac{\|b\|_\infty T}{\lambda}<\infty .
\end{equation*}
Thus, $w\in L^1([0,\infty))$. 

\emph{(iii)-(a)} The hypothesis supplies an iISS Lyapunov function for the regular system \eqref{eq:dil-desing} with linearly bounded supply rate. Lemma~\ref{lem:iiss} (Appendix~\ref{app:iiss}) then furnishes $\beta\in\mathcal{KL}$ and $\chi\in\mathcal K_\infty$ for which \eqref{eq:dil-est} holds for all $s\ge s_0\ge0$. The supply rate being linearly bounded, $d\in L^1([0,\infty))$ yields $\int_0^\infty|d(s)|{\rm d}s<\infty$. Taking $s_0=0$, the $\mathcal{KL}$ term is at most $\beta(|\delta_0|,0)$ and the gain term at most $\chi\left(\int_0^\infty|d(s)|{\rm d}s\right)$, both finite, so $\delta$ is bounded. Fixing $s_0\ge0$ and letting $s\to\infty$, the $\mathcal{KL}$ term vanishes, leaving $\limsup_{s\to\infty}|\delta(s)|\le\chi\left(\int_{s_0}^\infty|d(s)|{\rm d}s\right)$. The left side is independent of $s_0$, so letting $s_0\to\infty$ and using that the tail of the convergent integral vanishes (with $\chi(0)=0$) gives $\delta(s)\to0$.

\emph{(iii)-(b)} Fix any $l>0$. By continuity of $f$, $f(x)\ne0$ for $x\ne0$, and $\liminf_{|x|\to\infty}|f(x)|>0$, the quantity $m_l:=\inf_{|x|\ge l}|f(x)|$ is positive. Since $d(s)\to0$, there is $S_l$ with $|d(s)|\le m_l/2$ for $s\ge S_l$. For $\delta\ge l$, the sign condition $xf(x)<0$ and $|f|\ge m_l$ give $f(\delta)\le-m_l$, so ${{\rm d}\delta}/{{\rm d}s}=f(\delta)+d(s)\le-m_l/2$ for $s\ge S_l$. Symmetrically, ${{\rm d}\delta}/{{\rm d}s}\ge m_l/2$ when $\delta\le-l$. Hence, for $s\ge S_l$, the right-hand side points strictly into $[-l,l]$ whenever $|\delta|\ge l$, so $\delta$ enters $[-l,l]$ in finite $s$ and remains there: $\limsup_{s\to\infty}|\delta(s)|\le l$. As $l>0$ is arbitrary, $\delta(s)\to0$.

In either case, $t(s)\to T^-$ as $s\to\infty$, hence $\delta(t)\to0$ as $t\to T^-$. \qed
\end{proof}

\appsection{Persistence of a margin under a singular gain}\label{app:safety}

This appendix proves the safety counterpart of Lemma \ref{lem:capture}. Here, the singular gain $c/r$ enforces persistence ($r$ never reaches zero) rather than finite-time collapse. The coupling, in which the same singular gain also drives the regulated error $e$, is resolved by passing to the margin-generated time $\sigma=\int_0^t r^{-1}$: the ratio $\zeta=e/r$ contracts, so $|e|\lesssim r$, and the perturbation $e$ exerts on $\dot r$ vanishes as $r\to 0^+$.

\begin{lemma}[Persistence of a margin under a singular gain]\label{lem:safety}
Let $T\in(0,\infty]$, and let $r:[0,T)\to(0,\infty)$ and $e:[0,T)\to\mathbb{R}$ be absolutely continuous. Suppose there exist constants $k,c,L,\mu,\kappa>0$ and $B\ge0$ with $c>L$, such that, for almost every $t\in[0,T)$,
\begin{eqnarray}
  \frac{\rm d}{{\rm d}t}|e| &\le& -\left(k+\frac{c}{r}\right)|e| + B, \label{eq:saf-e}\\
  \dot r &\ge& \max\left\{-L, \mu-\kappa\left(|e|+\frac12 e^2\right)\right\}. \label{eq:saf-r}
\end{eqnarray}
Define $\nu:=c-L$, $r_\star:={e^\star}/{\zeta_M}$ (with the convention $r_\star:=\infty$ if $\zeta_M=0$), $\underline r:=\min\{r(0),r_\star\}$,
\begin{equation}\label{eq:saf-data}
  e^\star:=\sqrt{1+\frac{\mu}{\kappa}}-1,\quad
  \zeta_M:=\max\left\{\frac{|e(0)|}{r(0)},\frac{B}{\nu}\right\},\quad
  T_\eta:=\max\left\{0, \frac{1}{k}\left(\ln\frac{|e(0)|}{e^\star-B/k}\right) \right\}
\end{equation}
with the convention $\ln(0):=-\infty$. Assume the gain condition $B/k<e^\star$. Then, for all $t\in[0,T)$:
\begin{itemize}
  \item[(i)] $|e(t)|\le\max\{|e(0)|,B/k\}$, and $|e(t)|\le e^\star$ for $t\in[T_\eta,T)$;
  \item[(ii)] $|e(t)|\le\zeta_M\,r(t)$;
  \item[(iii)] $r(t)\ge\underline r>0$, and $r(t)\ge\underline r+{\mu}(t-T_\eta)/2$ for $t\in[T_\eta,T)$.
\end{itemize}
\end{lemma}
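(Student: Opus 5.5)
The plan is to prove the three items in order, each building on the previous one. The key device is the ratio $\zeta:=|e|/r$, which is absolutely continuous on $[0,T)$ because $r>0$ there.

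\emph{Item (i).} Since $c/r>0$, hypothesis \eqref{eq:saf-e} gives $\frac{\rm d}{{\rm d}t}|e|\le -k|e|+B$. The comparison principle then yields
\begin{equation*}
|e(t)|\le \frac{B}{k}+e^{-kt}\left(|e(0)|-\frac{B}{k}\right)\le\max\left\{|e(0)|,\frac{B}{k}\right\}.
\end{equation*}
The cruder majorant $|e(t)|\le B/k+e^{-kt}|e(0)|$ is at most $e^\star$ as soon as $e^{-kt}|e(0)|\le e^\star-B/k$. Here $e^\star-B/k>0$ by the gain condition, and this inequality holds exactly when $t\ge T_\eta$, with the convention $\ln(0)=-\infty$ covering $e(0)=0$. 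This gives the second part of (i).

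\emph{Item (ii), the ratio bound.} Almost everywhere,
\begin{equation*}
\dot\zeta=\frac{1}{r}\frac{\rm d}{{\rm d}t}|e|-\zeta\frac{\dot r}{r}.
\end{equation*}
Using \eqref{eq:saf-e} in the first term, and $\dot r\ge -L$ together with $\zeta\ge0$ in the second, this becomes
\begin{equation*}
\dot\zeta\le -k\zeta+\frac{B-\nu\zeta}{r},\qquad \nu=c-L>0.
\end{equation*}
Hence $\dot\zeta<0$ a.e.\ wherever $\zeta\ge B/\nu$ (and $\zeta>0$). I would then show that $\{\zeta\le\zeta_M\}$ is forward invariant by a first-exit-time argument. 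Suppose $\zeta(t_1)>\zeta_M$ at some $t_1$, and let $t_0$ be the last time before $t_1$ with $\zeta(t_0)=\zeta_M$. On $(t_0,t_1]$ we have $\zeta>\zeta_M\ge B/\nu$, so $\zeta$ is nonincreasing there, which contradicts $\zeta(t_1)>\zeta(t_0)$. This yields $|e(t)|\le\zeta_M r(t)$.

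\emph{Item (iii).} The algebraic fact underlying the choice of $e^\star$ is $(1+e^\star)^2=1+\mu/\kappa$, which gives $e^\star+\tfrac12(e^\star)^2=\mu/(2\kappa)$. Therefore \eqref{eq:saf-r} implies $\dot r\ge\mu/2$ a.e.\ whenever $|e|\le e^\star$.

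For the lower bound, whenever $r(t)\le r_\star$, item (ii) gives $|e(t)|\le\zeta_M r_\star=e^\star$, and hence $\dot r\ge\mu/2>0$. Now suppose $r(t_1)<\underline r$ for some $t_1$. Take the last time $t_0<t_1$ at which $r=\underline r$; such a time exists because $r(0)\ge\underline r$ and $r$ is continuous. On $(t_0,t_1]$ we have $r<\underline r\le r_\star$, so $r$ is increasing there, which contradicts $r(t_1)<r(t_0)$. If $\zeta_M=0$, then $e\equiv0$, so $\dot r\ge\mu$ throughout and the bound is trivial.

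For the growth bound, on $[T_\eta,T)$ item (i) gives $|e|\le e^\star$, hence $\dot r\ge\mu/2$. Integrating from $T_\eta$ and using $r(T_\eta)\ge\underline r$ gives the linear escape estimate $r(t)\ge\underline r+\mu(t-T_\eta)/2$.

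The main obstacle I expect is item (ii). The ratio dynamics involve the nonsmooth quantity $|e|$ and only a.e.\ differential inequalities, so I will argue via monotonicity of the absolutely continuous function $\zeta$ on intervals where the derivative bound is negative, rather than via a classical invariance theorem. I also need to check that the $-\zeta\dot r/r$ term is handled with the correct sign, which is where $c>L$ enters.
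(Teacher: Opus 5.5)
Your proposal is correct and follows essentially the same route as the paper. It uses the same comparison bound for (i), the same ratio $\zeta=|e|/r$ with the same key inequality for (ii) (where $c>L$ absorbs the $-\zeta\dot r/r$ term), and for (iii) the same ``$\dot r\ge\mu/2$ whenever $r\le r_\star$'' persistence followed by integration from $T_\eta$. The only cosmetic difference is in (ii): the paper passes to the margin-generated time $\sigma=\int_0^t r^{-1}$, where the ratio inequality becomes $\frac{\rm d}{{\rm d}\sigma}|\zeta|\le-\nu|\zeta|+B$ and a standard comparison applies, whereas you stay in $t$ and use a first-exit argument on the equivalent inequality, which is equally valid.
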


\noindent
\textit{Interpretation.}
Read $r$ as the distance to a level the system must never reach and $e$ as a steering error. The gain $c/r$ in the error dynamics grows without bound as $r\to0^+$. Far from zero, it merely damps $e$; near zero, it forces $e$ to shrink in proportion to $r$ itself, so the error can no longer push $r$ down. The nominal outward drift $\mu$ then prevails, and $r$ is held away from zero for all time, escaping at least linearly once the error has settled.

\begin{proof}
\emph{Step 1 (Inner ultimate bound).}
On $[0,T)$ we have $c/r>0$, so \eqref{eq:saf-e} gives $\frac{\rm d}{{\rm d}t}|e|\le-k|e|+B$. By the comparison principle $|e(t)|\le\bar{e}(t):=B/k+e^{-kt}\left(|e(0)|-B/k\right)\le\max\{|e(0)|,B/k\}$. 
For $t\ge T_\eta$, one gets $e^{-kt}|e(0)|\le e^\star-B/k$. Since $B/k<e^\star$, the majorant $\bar{e}$ satisfies $\bar{e}(t)\le e^\star$ for $t\ge T_\eta$, giving~(i).

\emph{Step 2 (Bounded ratio).}
Introduce the margin-generated time $\sigma(t):=\int_0^t r(\tau)^{-1}\,{\rm d}\tau$ and the ratio $\zeta:=e/r$ on $[0,T)$. 
Since ${\rm d}\sigma=r^{-1}{\rm d}t$
and $|\zeta|=|e|/r$, one has $\frac{{\rm d}}{{\rm d}\sigma}|\zeta|=\frac{{\rm d}|e|}{{\rm d}t}-|\zeta|\,\dot r$. Using \eqref{eq:saf-e} and the identity $\left(k+{c}/{r}\right)|e|=(kr+c)|\zeta|$,
\begin{equation*}
  \frac{{\rm d}}{{\rm d}\sigma}|\zeta|
  =\frac{{\rm d}|e|}{{\rm d}t}-|\zeta|\,\dot r
  \le -(c+kr)|\zeta|+B-|\zeta|\,\dot r
  = -\left(c+kr+\dot r\right)|\zeta|+B .
\end{equation*}
Since $kr>0$ and $\dot r\ge-L$ by \eqref{eq:saf-r}, the bracket satisfies $c+kr+\dot r\ge c-L=\nu>0$, whence $\frac{{\rm d}}{{\rm d}\sigma}|\zeta|\le-\nu|\zeta|+B$. The comparison principle gives $|\zeta(\sigma)|\le e^{-\nu\sigma}|\zeta(0)|+{B}(1-e^{-\nu\sigma})/\nu\le\max\{|\zeta(0)|,B/\nu\}=\zeta_M$, the second inequality because the middle expression is a convex combination of $|\zeta(0)|$ and $B/\nu$. Hence $|e(t)|\le\zeta_M\,r(t)$, which is~(ii).

\emph{Step 3 (Persistence and escape).}
Substituting~(ii) into \eqref{eq:saf-r} yields
\begin{equation*}
  \dot r\ge\mu-\kappa\left(|e|+\frac12 e^2\right)\ge\mu-\kappa\left(\zeta_M r+\frac12\zeta_M^2 r^2\right).
\end{equation*}
The right-hand side is at least $\mu/2$ precisely when $\zeta_M r+\frac12\zeta_M^2 r^2\le\mu/(2\kappa)$, i.e., when $\zeta_M r\le\sqrt{1+\mu/\kappa}-1$, that is, for $0<r\le r_\star$ (with the convention $r_\star=\infty$ when $\zeta_M=0$ understood here). Hence $\dot r\ge\mu/2>0$ whenever $0<r\le r_\star$, so $r$ cannot decrease through $\underline r=\min\{r(0),r_\star\}$, proving $r(t)\ge\underline r>0$ for all $t\in[0,T)$. Finally, by~(i), for $t\in[T_\eta,T)$, we have $|e|\le e^\star$, so $\kappa(|e|+ e^2/2)\le\kappa(e^\star+(e^\star)^2/2)=\mu/2$ and \eqref{eq:saf-r} gives $\dot r\ge\mu/2$; integrating from $T_\eta$ yields $r(t)\ge r(T_\eta)+{\mu}(t-T_\eta)/2$, and combining with the persistence bound $r(T_\eta)\ge\underline r$ already proved gives $r(t)\ge\underline r+{\mu}(t-T_\eta)/2$, completing~(iii). \qed
\end{proof}

\appsection{An iISS convergence lemma under an integrable input}\label{app:iiss}

\begin{lemma}[Angeli--Sontag--Wang \cite{angeli2002characterization}]\label{lem:iiss}
Consider $\dot x=F(x,d(s))$ on $s\ge 0$, with $F$ continuous, locally Lipschitz in $x$, $F(0,0)=0$, and $d$ measurable and locally essentially bounded. If there exist a $\mathcal C^1$ proper positive-definite function $V$, a positive-definite function $\alpha$, and $\sigma\in\mathcal K$ such that $\dot V\le-\alpha(|x|)+\sigma(|d(s)|)$ along solutions, then the system is iISS: there exist $\beta\in\mathcal{KL}$ and $\chi_1,\chi_2\in\mathcal K_\infty$ with $|x(s)|\le\beta(|x(s_0)|,s-s_0)+\chi_1\!\left(\int_{s_0}^{s}\chi_2(|d(\tau)|)\,{\rm d}\tau\right)$ for all $s\ge s_0\ge 0$; in particular, if $\int_0^\infty\chi_2(|d(\tau)|)\,{\rm d}\tau<\infty$ (e.g., $d\in L^1$ and $\sigma$ linearly bounded), then $x(s)\to 0$.
\end{lemma}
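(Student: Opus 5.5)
The statement to prove is the Angeli--Sontag--Wang characterization of iISS, in the one direction used here: an iISS Lyapunov function implies the iISS estimate. I will prove the integral-to-estimate direction directly, using comparison arguments, and then read off convergence.

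\textbf{Step 1: reduce to a comparison inequality.} Since $V$ is proper and positive definite, I pick $\underline\alpha,\overline\alpha\in\mathcal K_\infty$ with $\underline\alpha(|x|)\le V(x)\le\overline\alpha(|x|)$. I then replace $\alpha(|x|)$ by a positive-definite function of $V$ itself, $\alpha_0(V):=\min_{\underline\alpha^{-1}(V)\le |x|\le\overline\alpha^{-1}(V)}\alpha(|x|)$, lower-bounded by a locally Lipschitz positive-definite $\hat\alpha$. This gives $\dot V\le-\hat\alpha(V)+\sigma(|d|)$. The difficulty is that $\hat\alpha$ need not be of class $\mathcal K_\infty$; it may even tend to $0$ at infinity.

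\textbf{Step 2: change of variables.} The standard trick is to apply a transformation $W=\rho(V)$ with $\rho(r)=\int_0^r\frac{{\rm d}\tau}{1+\tau}$, or a variant with a $\mathcal K_\infty$ weight. This makes the decay term comparable to a function that can be split, so that $\dot W\le-\tilde\alpha(W)+\sigma(|d|)$. I then set $\chi_2:=\sigma$ and $z(s):=W(s)-\int_{s_0}^s\sigma(|d|)$. With this, $\dot z\le-\tilde\alpha(\max\{z,0\})$ whenever $W\ge 2\int\sigma$, and otherwise $W\le 2\int\sigma$ trivially. The scalar comparison lemma, which supplies a $\mathcal{KL}$ bound for $\dot y=-\tilde\alpha(y)$ with $\tilde\alpha$ positive definite, then gives
\[W(s)\le\beta_0(W(s_0),s-s_0)+2\int_{s_0}^s\sigma(|d|).\]
Mapping back through $\rho^{-1}$ and $\underline\alpha^{-1}$, and using the weak triangle inequality $\gamma(a+b)\le\gamma(2a)+\gamma(2b)$, produces $\beta$ and $\chi_1$.

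\textbf{Main obstacle.} The hard part is Step 2 when $\rho^{-1}$ is not globally defined. The image of $\rho$ may be bounded, so $\rho^{-1}$ cannot be applied to large integrals, and $\chi_1$ must still be $\mathcal K_\infty$. I would handle this by case-splitting on whether $W(s_0)+2\int\sigma$ lies in the range of $\rho$. If it does not, I bound $|x|$ by $\underline\alpha^{-1}$ of a direct integration of $\dot V\le\sigma(|d|)$, namely $V(s)\le V(s_0)+\int\sigma$. This yields a $\mathcal K_\infty$ bound that can be absorbed into $\beta(\cdot,0)$ plus $\chi_1$. Gluing the two cases gives a single $\mathcal{KL}$/$\mathcal K_\infty$ pair.

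\textbf{Convergence.} For the final claim, with $\sigma$ linearly bounded and $d\in L^1$, the quantity $\int_0^\infty\chi_2(|d|)$ is finite. I then argue exactly as in Lemma~\ref{lem:dilation}(iii)(a): fix $s_0$, let $s\to\infty$, and use tails of the convergent integral together with $\chi_1(0)=0$ to conclude $x(s)\to0$.
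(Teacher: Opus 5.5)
The paper gives no proof of this lemma; it is quoted from \cite{angeli2002characterization}. Your skeleton is the standard one: bound $\alpha(|x|)$ below by a positive-definite function of $V$, subtract the accumulated supply $I(s):=\int_{s_0}^s\sigma(|d(\tau)|)\,{\rm d}\tau$, and compare. There is, however, a genuine gap at the one step that carries the entire iISS difficulty.

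On $\{W\ge 2I\}$ you have $\dot z\le-\tilde\alpha(W)$ with $W=z+I\in[z,2z]$. Passing to $-\tilde\alpha(\max\{z,0\})$ requires $\tilde\alpha$ to be nondecreasing, but it is only positive definite. The substitution $W=\ln(1+V)$ does not change that: the new rate $\hat\alpha(e^W-1)e^{-W}$ is again merely positive definite. Moreover, when $\alpha(r)\to0$ as $r\to\infty$, no reparametrization with bounded $\rho'$ (needed to keep the supply of the form $C\sigma(|d|)$) can make the rate monotone. Non-monotone decay is exactly what separates iISS from ISS, so this, not the range of $\rho$, is the real obstacle. The repair is $\alpha^\sharp(r):=\min_{r\le\xi\le 2r}\tilde\alpha(\xi)$, which is continuous and positive definite. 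It gives $\dot z\le-\alpha^\sharp(z)$ on $S:=\{s\ge s_0: W(s)\ge 2I(s)\}$, and this is what the factor $2$ in your threshold is for.

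Second, the comparison lemma cannot be applied on $[s_0,s]$ when the trajectory leaves $S$ and re-enters it. Your ``otherwise $W\le 2\int\sigma$'' only covers times outside $S$. You need a last-entry argument: if $s\in S$ but $[s_0,s]\not\subset S$, let $s_1:=\sup([s_0,s]\setminus S)$. Then $W(s_1)=2I(s_1)$, i.e.\ $z(s_1)=I(s_1)$. Since $z$ is nonincreasing on $[s_1,s]\subset S$, you get $W(s)=z(s)+I(s)\le 2I(s)$. Only with both fixes does $W(s)\le\beta_0(W(s_0),s-s_0)+2I(s)$ follow; your mapping back via the weak triangle inequality is then fine.

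Once the min trick is used, the change of variables is superfluous: run the argument directly on $\dot V\le-\hat\alpha(V)+\sigma(|d|)$. Your ``main obstacle'' is moot anyway, since $\ln(1+r)$ has unbounded range. The patch you propose would also not work: $V(s)\le V(s_0)+\int\sigma$ contains a non-decaying $V(s_0)$ that cannot be absorbed into a $\mathcal{KL}$ term.

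Smaller points:
\begin{itemize}
\item The sandwich gives $\overline\alpha^{-1}(V)\le|x|\le\underline\alpha^{-1}(V)$, not the reverse.
\item $\chi_2=\sigma$ is only class $\mathcal K$, so take $\chi_2(r):=\sigma(r)+r$. This stays linearly bounded when $\sigma$ is, so the ``e.g.'' clause survives. That clause is a genuine advantage of your direct proof, because it ties $\chi_2$ to $\sigma$ explicitly.
\item Forward completeness should be stated as a consequence of the estimate on the maximal interval of existence.
\end{itemize}
The convergence argument at the end is correct.
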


\bibliographystyle{model1-num-names}

\bibliography{v1-arXiv}

\begin{thebibliography}{34}
\expandafter\ifx\csname natexlab\endcsname\relax\def\natexlab#1{#1}\fi
\providecommand{\bibinfo}[2]{#2}
\ifx\xfnm\relax \def\xfnm[#1]{\unskip,\space#1}\fi
\bibitem[{Krstić(2026)}]{krstic2026directional}
\bibinfo{author}{M.~Krstić},
\newblock \bibinfo{title}{Directional pursuit and evasion: Input-to-state stabilizing and inverse-minimax feedbacks},
\newblock \bibinfo{journal}{Journal of Guidance, Control, and Dynamics}  (\bibinfo{year}{2026}) \bibinfo{pages}{1--12}.
\bibitem[{Isaacs(1965)}]{isaacs1965differential}
\bibinfo{author}{R.~Isaacs}, \bibinfo{title}{Differential Games: A Mathematical Theory with Applications to Warfare and Pursuit, Control and Optimization}, \bibinfo{publisher}{Wiley}, \bibinfo{address}{London, U.K.}, \bibinfo{year}{1965}.
\bibitem[{Ba{\c{s}}ar and Olsder(1998)}]{bacsar1998dynamic}
\bibinfo{author}{T.~Ba{\c{s}}ar}, \bibinfo{author}{G.~J. Olsder}, \bibinfo{title}{Dynamic noncooperative game theory}, \bibinfo{publisher}{SIAM}, \bibinfo{address}{Philadelphia}, \bibinfo{year}{1998}.
\bibitem[{Exarchos et~al.(2015)Exarchos, Tsiotras, and Pachter}]{exarchos2015suicidal}
\bibinfo{author}{I.~Exarchos}, \bibinfo{author}{P.~Tsiotras}, \bibinfo{author}{M.~Pachter},
\newblock \bibinfo{title}{On the suicidal pedestrian differential game},
\newblock \bibinfo{journal}{Dynamic Games and Applications} \bibinfo{volume}{5} (\bibinfo{year}{2015}) \bibinfo{pages}{297--317}.
\bibitem[{Merz(1972)}]{merz1972game}
\bibinfo{author}{A.~Merz},
\newblock \bibinfo{title}{The game of two identical cars},
\newblock \bibinfo{journal}{Journal of Optimization Theory and Applications} \bibinfo{volume}{9} (\bibinfo{year}{1972}) \bibinfo{pages}{324--343}.
\bibitem[{Buzikov and Galyaev(2023)}]{buzikov2023game}
\bibinfo{author}{M.~Buzikov}, \bibinfo{author}{A.~Galyaev},
\newblock \bibinfo{title}{The game of two identical cars: An analytical description of the barrier},
\newblock \bibinfo{journal}{Journal of Optimization Theory and Applications} \bibinfo{volume}{198} (\bibinfo{year}{2023}) \bibinfo{pages}{988--1018}.
\bibitem[{Bera et~al.(2017)Bera, Makkapati, and Kothari}]{bera2017comprehensive}
\bibinfo{author}{R.~Bera}, \bibinfo{author}{V.~R. Makkapati}, \bibinfo{author}{M.~Kothari},
\newblock \bibinfo{title}{A comprehensive differential game theoretic solution to a game of two cars},
\newblock \bibinfo{journal}{Journal of Optimization Theory and Applications} \bibinfo{volume}{174} (\bibinfo{year}{2017}) \bibinfo{pages}{818--836}.
\bibitem[{Mitchell et~al.(2005)Mitchell, Bayen, and Tomlin}]{mitchell2005time}
\bibinfo{author}{I.~M. Mitchell}, \bibinfo{author}{A.~M. Bayen}, \bibinfo{author}{C.~J. Tomlin},
\newblock \bibinfo{title}{A time-dependent {H}amilton-{J}acobi formulation of reachable sets for continuous dynamic games},
\newblock \bibinfo{journal}{IEEE Trans. Autom. Contr.} \bibinfo{volume}{50} (\bibinfo{year}{2005}) \bibinfo{pages}{947--957}.
\bibitem[{Sun et~al.(2017)Sun, Tsiotras, Lolla, Subramani, and Lermusiaux}]{sun2017multiple}
\bibinfo{author}{W.~Sun}, \bibinfo{author}{P.~Tsiotras}, \bibinfo{author}{T.~Lolla}, \bibinfo{author}{D.~N. Subramani}, \bibinfo{author}{P.~F. Lermusiaux},
\newblock \bibinfo{title}{Multiple-pursuer/one-evader pursuit--evasion game in dynamic flowfields},
\newblock \bibinfo{journal}{Journal of guidance, control, and dynamics} \bibinfo{volume}{40} (\bibinfo{year}{2017}) \bibinfo{pages}{1627--1637}.
\bibitem[{Chung et~al.(2011)Chung, Hollinger, and Isler}]{chung2011search}
\bibinfo{author}{T.~H. Chung}, \bibinfo{author}{G.~A. Hollinger}, \bibinfo{author}{V.~Isler},
\newblock \bibinfo{title}{Search and pursuit-evasion in mobile robotics: A survey},
\newblock \bibinfo{journal}{Autonomous robots} \bibinfo{volume}{31} (\bibinfo{year}{2011}) \bibinfo{pages}{299--316}.
\bibitem[{Guibas et~al.(1999)Guibas, Latombe, LaValle, Lin, and Motwani}]{guibas1999visibility}
\bibinfo{author}{L.~J. Guibas}, \bibinfo{author}{J.-C. Latombe}, \bibinfo{author}{S.~M. LaValle}, \bibinfo{author}{D.~Lin}, \bibinfo{author}{R.~Motwani},
\newblock \bibinfo{title}{A visibility-based pursuit-evasion problem},
\newblock \bibinfo{journal}{International Journal of Computational Geometry \& Applications} \bibinfo{volume}{9} (\bibinfo{year}{1999}) \bibinfo{pages}{471--493}.
\bibitem[{Gerkey et~al.(2006)Gerkey, Thrun, and Gordon}]{gerkey2006visibility}
\bibinfo{author}{B.~P. Gerkey}, \bibinfo{author}{S.~Thrun}, \bibinfo{author}{G.~Gordon},
\newblock \bibinfo{title}{Visibility-based pursuit-evasion with limited field of view},
\newblock \bibinfo{journal}{The International Journal of Robotics Research} \bibinfo{volume}{25} (\bibinfo{year}{2006}) \bibinfo{pages}{299--315}.
\bibitem[{Lozano et~al.(2022)Lozano, Becerra, Ruiz, Bravo, and Murrieta-Cid}]{lozano2022visibility}
\bibinfo{author}{E.~Lozano}, \bibinfo{author}{I.~Becerra}, \bibinfo{author}{U.~Ruiz}, \bibinfo{author}{L.~Bravo}, \bibinfo{author}{R.~Murrieta-Cid},
\newblock \bibinfo{title}{A visibility-based pursuit-evasion game between two nonholonomic robots in environments with obstacles},
\newblock \bibinfo{journal}{Autonomous Robots} \bibinfo{volume}{46} (\bibinfo{year}{2022}) \bibinfo{pages}{349--371}.
\bibitem[{Ruiz et~al.(2013)Ruiz, Murrieta-Cid, and Marroquin}]{ruiz2013time}
\bibinfo{author}{U.~Ruiz}, \bibinfo{author}{R.~Murrieta-Cid}, \bibinfo{author}{J.~L. Marroquin},
\newblock \bibinfo{title}{Time-optimal motion strategies for capturing an omnidirectional evader using a differential drive robot},
\newblock \bibinfo{journal}{IEEE Trans. Robot.} \bibinfo{volume}{29} (\bibinfo{year}{2013}) \bibinfo{pages}{1180--1196}.
\bibitem[{Ruiz and Isler(2016)}]{ruiz2016capturing}
\bibinfo{author}{U.~Ruiz}, \bibinfo{author}{V.~Isler},
\newblock \bibinfo{title}{Capturing an omnidirectional evader in convex environments using a differential drive robot},
\newblock \bibinfo{journal}{IEEE Robot. Autom. Lett.} \bibinfo{volume}{1} (\bibinfo{year}{2016}) \bibinfo{pages}{1007--1013}.
\bibitem[{Bravo et~al.(2020)Bravo, Ruiz, and Murrieta-Cid}]{bravo2020pursuit}
\bibinfo{author}{L.~Bravo}, \bibinfo{author}{U.~Ruiz}, \bibinfo{author}{R.~Murrieta-Cid},
\newblock \bibinfo{title}{A pursuit--evasion game between two identical differential drive robots},
\newblock \bibinfo{journal}{J. Franklin Institute} \bibinfo{volume}{357} (\bibinfo{year}{2020}) \bibinfo{pages}{5773--5808}.
\bibitem[{Dovrat et~al.(2021)Dovrat, Tripathy, and Bruckstein}]{dovrat2021tracking}
\bibinfo{author}{D.~Dovrat}, \bibinfo{author}{T.~Tripathy}, \bibinfo{author}{A.~M. Bruckstein},
\newblock \bibinfo{title}{On tracking and capture in proportional-control bearing-only unicycle pursuit},
\newblock \bibinfo{journal}{IEEE Contr. Syst. Lett.} \bibinfo{volume}{6} (\bibinfo{year}{2021}) \bibinfo{pages}{2132--2137}.
\bibitem[{Nath and Ghose(2022)}]{nath2022two}
\bibinfo{author}{S.~Nath}, \bibinfo{author}{D.~Ghose},
\newblock \bibinfo{title}{A two-phase evasive strategy for a pursuit-evasion problem involving two non-holonomic agents with incomplete information},
\newblock \bibinfo{journal}{Eur. J. Control} \bibinfo{volume}{68} (\bibinfo{year}{2022}) \bibinfo{pages}{100677}.
\bibitem[{Sani et~al.(2020)Sani, Robu, and Hably}]{sani2020pursuit}
\bibinfo{author}{M.~Sani}, \bibinfo{author}{B.~Robu}, \bibinfo{author}{A.~Hably},
\newblock \bibinfo{title}{Pursuit-evasion game for nonholonomic mobile robots with obstacle avoidance using {NMPC}},
\newblock in: \bibinfo{booktitle}{2020 28th Mediterranean conference on control and automation (MED)}, \bibinfo{organization}{IEEE}, pp. \bibinfo{pages}{978--983}.
\bibitem[{Zhou et~al.(2026)Zhou, Li, Zhao, Wahlberg, and Hu}]{zhou2026nature}
\bibinfo{author}{P.~Zhou}, \bibinfo{author}{S.~Li}, \bibinfo{author}{B.~Zhao}, \bibinfo{author}{B.~Wahlberg}, \bibinfo{author}{X.~Hu},
\newblock \bibinfo{title}{Nature-inspired dynamic control for pursuit-evasion of robots},
\newblock \bibinfo{journal}{Automatica} \bibinfo{volume}{183} (\bibinfo{year}{2026}) \bibinfo{pages}{112629}.
\bibitem[{Kreindler(1973)}]{kreindler1973optimality}
\bibinfo{author}{E.~Kreindler},
\newblock \bibinfo{title}{Optimality of proportional navigation},
\newblock \bibinfo{journal}{AIAA Journal} \bibinfo{volume}{11} (\bibinfo{year}{1973}) \bibinfo{pages}{878--880}.
\bibitem[{Ryoo et~al.(2004)Ryoo, Kim, and Tahk}]{ryoo2004capturability}
\bibinfo{author}{C.-K. Ryoo}, \bibinfo{author}{Y.-H. Kim}, \bibinfo{author}{M.-J. Tahk},
\newblock \bibinfo{title}{Capturability analysis of pn laws using {L}yapunov stability theory},
\newblock in: \bibinfo{booktitle}{AIAA Guidance, Navigation, and Control Conference and Exhibit}, p. \bibinfo{pages}{4883}.
\bibitem[{Palumbo et~al.(2010)Palumbo, Blauwkamp, and Lloyd}]{palumbo2010modern}
\bibinfo{author}{N.~F. Palumbo}, \bibinfo{author}{R.~A. Blauwkamp}, \bibinfo{author}{J.~M. Lloyd},
\newblock \bibinfo{title}{Modern homing missile guidance theory and techniques},
\newblock \bibinfo{journal}{Johns Hopkins APL technical digest} \bibinfo{volume}{29} (\bibinfo{year}{2010}) \bibinfo{pages}{42--59}.
\bibitem[{Krsti{\'c} and Li(1998)}]{krstic1998inverse}
\bibinfo{author}{M.~Krsti{\'c}}, \bibinfo{author}{Z.-H. Li},
\newblock \bibinfo{title}{Inverse optimal design of input-to-state stabilizing nonlinear controllers},
\newblock \bibinfo{journal}{IEEE Trans. Autom. Contr.} \bibinfo{volume}{43} (\bibinfo{year}{1998}) \bibinfo{pages}{336--350}.
\bibitem[{Han and Wang(2024)}]{han2024safety}
\bibinfo{author}{T.~Han}, \bibinfo{author}{B.~Wang},
\newblock \bibinfo{title}{Safety-critical stabilization of force-controlled nonholonomic mobile robots},
\newblock \bibinfo{journal}{IEEE Contr. Syst. Lett.} \bibinfo{volume}{8} (\bibinfo{year}{2024}) \bibinfo{pages}{2469--2474}.
\bibitem[{Wang et~al.(2026)Wang, Han, and Wang}]{wang2026further}
\bibinfo{author}{B.~Wang}, \bibinfo{author}{T.~Han}, \bibinfo{author}{G.~Wang},
\newblock \bibinfo{title}{Further results on safety-critical stabilization of force-controlled nonholonomic mobile robots},
\newblock \bibinfo{journal}{ASME Letters in Dynamic Systems and Control} \bibinfo{volume}{6} (\bibinfo{year}{2026}) \bibinfo{pages}{021011}.
\bibitem[{Todorovski et~al.(2025)Todorovski, Kim, and Krsti{\'c}}]{todorovski2025modular}
\bibinfo{author}{V.~Todorovski}, \bibinfo{author}{K.~H. Kim}, \bibinfo{author}{M.~Krsti{\'c}},
\newblock \bibinfo{title}{Modular design of strict control {L}yapunov functions for global stabilization of the unicycle in polar coordinates},
\newblock \bibinfo{journal}{arXiv preprint arXiv:2509.25575}  (\bibinfo{year}{2025}).
\bibitem[{Todorovski et~al.(2026)Todorovski, Kim, Astolfi, and Krstić}]{todorovski2026nonholonomicrobotparkingfeedback}
\bibinfo{author}{V.~Todorovski}, \bibinfo{author}{K.~H. Kim}, \bibinfo{author}{A.~Astolfi}, \bibinfo{author}{M.~Krstić}, \bibinfo{title}{Nonholonomic robot parking by feedback---{P}art {I}: Modular strict {CLF} designs}, \bibinfo{year}{2026}.
\bibitem[{Wang et~al.(2021)Wang, Nersesov, and Ashrafiuon}]{wang2021formation}
\bibinfo{author}{B.~Wang}, \bibinfo{author}{S.~Nersesov}, \bibinfo{author}{H.~Ashrafiuon},
\newblock \bibinfo{title}{Formation regulation and tracking control for nonholonomic mobile robot networks using polar coordinates},
\newblock \bibinfo{journal}{IEEE Contr. Syst. Lett.} \bibinfo{volume}{6} (\bibinfo{year}{2021}) \bibinfo{pages}{1909--1914}.
\bibitem[{Krsti{\'c} et~al.(2025)Krsti{\'c}, Todorovski, Kim, and Astolfi}]{krstic2025integrator}
\bibinfo{author}{M.~Krsti{\'c}}, \bibinfo{author}{V.~Todorovski}, \bibinfo{author}{K.~H. Kim}, \bibinfo{author}{A.~Astolfi},
\newblock \bibinfo{title}{Integrator forwarding design for unicycles with constant and actuated velocity in polar coordinates},
\newblock \bibinfo{journal}{arXiv preprint arXiv:2509.25579}  (\bibinfo{year}{2025}).
\bibitem[{Kim et~al.(2025{\natexlab{a}})Kim, Todorovski, and Krsti{\'c}}]{kim2025inverse}
\bibinfo{author}{K.~H. Kim}, \bibinfo{author}{V.~Todorovski}, \bibinfo{author}{M.~Krsti{\'c}},
\newblock \bibinfo{title}{Inverse optimal feedback and gain margins for unicycle stabilization},
\newblock \bibinfo{journal}{arXiv preprint arXiv:2509.25563}  (\bibinfo{year}{2025}{\natexlab{a}}).
\bibitem[{Kim et~al.(2025{\natexlab{b}})Kim, Todorovski, and Krstić}]{kim2025nonholonomicrobotparkingfeedback}
\bibinfo{author}{K.~H. Kim}, \bibinfo{author}{V.~Todorovski}, \bibinfo{author}{M.~Krstić}, \bibinfo{title}{Nonholonomic robot parking by feedback---{P}art {II}: Nonmodular, inverse optimal, adaptive, prescribed/fixed-time and safe designs}, \bibinfo{year}{2025}{\natexlab{b}}.
\bibitem[{Karafyllis and Krstić(2025)}]{karafyllis2025robust}
\bibinfo{author}{I.~Karafyllis}, \bibinfo{author}{M.~Krstić}, \bibinfo{title}{Robust Adaptive Control: Deadzone-Adapted Disturbance Suppression}, \bibinfo{publisher}{SIAM}, \bibinfo{address}{Philadelphia, PA}, \bibinfo{year}{2025}.
\bibitem[{Angeli et~al.(2000)Angeli, Sontag, and Wang}]{angeli2002characterization}
\bibinfo{author}{D.~Angeli}, \bibinfo{author}{E.~D. Sontag}, \bibinfo{author}{Y.~Wang},
\newblock \bibinfo{title}{A characterization of integral input-to-state stability},
\newblock \bibinfo{journal}{IEEE Trans. Autom. Contr.} \bibinfo{volume}{45} (\bibinfo{year}{2000}) \bibinfo{pages}{1082--1097}.

\end{thebibliography}

\end{document}